\newtheorem{remark}{Remark}
\newtheorem{lemma}{Lemma}
\newtheorem{definition}{Definition}
\newtheorem{assumption}{Assumption}
\newtheorem{proposition}{Proposition}
\newtheorem{proof}{Proof}
\def\begcen{\begin{center}}
\def\endcen{\end{center}}
\newcommand{\col}{\mbox{col}}
\def\diag{\mbox{diag}}
\def\ob#1{{\aoverbrace[L1R]{#1}}}
\def\bfy{{\bf y}}
\def\bby{\mathbb{Y}}
\def\bq{\mathbb{Q}}
\def\calb{\mathcal{B}}
\def\caln{\mathcal{N}}
\def\cale{\mathcal{E}}
\def\cali{\mathcal{I}}
\def\calm{{\mathcal M}}
\def\hal{\frac{1}{2}}
\def\calh{{\mathcal{H}}}
\def\L2e{{\mathcal L}_{2e}}
\def\rea{\mathbb{R}}
\def\diag{\mbox{diag}}
\def\adj{\mbox{adj}}
\def\et{\epsilon_t}
\def\l2{{\mathcal L}_2}
\def\l2e{{\cal L}_{2e}}
\def\rea{\mathbb{R}}
\def\diag{\mbox{diag}}
\def\skew{\mbox{skew}}
\def\tr{\mbox{tr}}
\def\begequarr{\begin{eqnarray}}
\def\endequarr{\end{eqnarray}}
\def\begequarrs{\begin{eqnarray*}}
\def\endequarrs{\end{eqnarray*}}
\def\begarr{\begin{array}}
\def\endarr{\end{array}}
\def\begequ{\begin{equation}}
\def\endequ{\end{equation}}
\def\begdes{\begin{description}}
\def\enddes{\end{description}}
\def\begenu{\begin{enumerate}}
\def\begite{\begin{itemize}}
\def\endite{\end{itemize}}
\def\endenu{\end{enumerate}}
\def\lef[{\left[\begin{array}}
\def\rig]{\end{array}\right]}
\def\qed{\hfill$\blacksquare$}
\def\begcen{\begin{center}}
\def\endcen{\end{center}}
\def\begrem{\begin{remark}\rm}
\def\endrem{\end{remark}}
\def\begmat#1{\begin{bmatrix}#1\end{bmatrix}}
\begin{document}

\begin{frontmatter}

\title{Globally Convergent Visual-Feature Range Estimation with Biased Inertial Measurements\thanksref{footnoteinfo}} 

\thanks[footnoteinfo]{This paper was not presented at any IFAC
meeting. This paper is supported by the Australian Research Council. 
}

\author[ACFR]{Bowen Yi}\ead{\rm bowen.yi@sydney.edu.au},
\author[DJI]{Chi Jin}, 
\author[ACFR]{Ian R. Manchester}


\address[ACFR]{Australian Centre for Field Robotics, Sydney Institute for Robotics and Intelligent Systems, The University of Sydney, NSW 2006, Australia}
\address[DJI]{DJI Innovation Inc. Shenzhen 518057, China}

\begin{keyword}                        
Observers Design; Nonlinear Systems; Range Estimation; Robotics             
\end{keyword}                             

\begin{abstract}
The design of a globally convergent position observer for feature points from visual information is a challenging problem, especially for the case with only inertial measurements and without assumptions of uniform observability, which remained open for a long time. We give a solution to the problem in this paper assuming that only the bearing of a feature point, and biased linear acceleration and rotational velocity of a robot --- all in the body-fixed frame --- are available. Further, in contrast to existing related results, we do not need the value of the gravitational constant either. The proposed approach builds upon the parameter estimation-based observer recently developed in (Ortega et al., {\em Syst. Control Lett.}, vol. 85, 2015) and its extension to matrix Lie groups in our previous work. Conditions on the robot trajectory under which the observer converges are given, and these are strictly weaker than the standard persistency of excitation and uniform complete observability conditions. Finally, as an illustration, we apply the proposed design to the visual inertial navigation problem. 
\vspace{-.5cm}
\end{abstract}
\end{frontmatter}

\section{Introduction}
\label{sec1}

Determination of the position of feature points from visual measurements is a classical problem in several fields, including machine vision, robotics and control \cite{ASTetal02,BERetal,DANetal,DELetal,LOUetal,ZHAZEL}. However, its mathematical model does not fall into the canonical forms that have been comprehensively studied in the nonlinear observer community \cite{BES}, thus being of theoretical interest. With a single monocular camera, estimating the position of a feature point is equivalent to estimating its range or depth. There are two general classes of methodologies: batch methods \cite{HANKAN} and observer design \cite{DELetal,HUetal,KARAST}. In this paper, we focus on the latter for its simpler on-line computations.

In \cite{JANGHO}, the authors adopt a change of coordinate to the inverse of depth, which has since become a popular formulation, and propose an identifier based observer. Its main drawback is high-gain injection thus yielding sensitivity to measurement noise. Thereafter, many nonlinear observers were proposed using different constructive tools. A less complicated observer design based on Lyapunov analysis is proposed in \cite{DIXetal}, which guarantees global asymptotic stability (GAS) under an instantaneous observability assumption on the robot trajectory. An extension to paracatadioptric camera can be found in \cite{HUetal}, where a sliding-mode observer is proposed for both affine and non-affine systems. In \cite{KARAST}, a reduced-order observer is proposed based on the widely used immersion and invariance (I\&I) technique \cite{ASTetal}, achieving semi-global convergence under the same condition in \cite{DIXetal}. To address visual servoing, a simple feature depth observer is introduced in \cite{DELetal}, which builds upon a well-known lemma from adaptive control. This design requires a persistency of excitation (PE) condition --- weaker than the condition in \cite{DIXetal,KARAST} --- however, with only local convergence guaranteed. More recently it is shown that by selecting alternative outputs we may obtain a linear time-varying (LTV) model \cite{HAMSAM,LOUetal}, for which Kalman-Bucy or Riccati observer is applicable to achieve global stability under some PE conditions. In the last decade, practical considerations related to this problem have also received significant attention \cite{SASetal,SPIetal,TAHetal}.

In all the above works, it is assumed a measurement of linear velocity available in observer design. However, a practically important scenario is that the robot is equipped with inertial measurement unit (IMU), which provides measurements of linear acceleration rather than velocity. In spite of intensive research efforts, we are unaware of any observer for the extension to this case. The main challenge relies on that the unknown attitude --- living in the special orthogonal group --- appears in the dynamics of the body-fixed velocity. 

In this paper, we give the first constructive solution to the problem, presenting a novel globally exponentially convergent position observer for a single feature point. The constructive tool we adopt is the recently-introduced parameter estimation-based observer (PEBO) in \cite{ORTetalscl} and its extension to matrix Lie groups in our previous work \cite{YIetalCDC} on the simultaneous localization and mapping (SLAM) problem. The central idea in PEBO is to translate the estimation of a (time-varying) system state into one of estimating constant parameters. Another merit is that, unlike all existing solutions, the proposed observer neither requires instantaneous observability \cite{KARAST}, uniform complete observability \cite{LOUetal}, nor PE conditions \cite{DELetal}, and its convergence is achieved under a much weaker interval excitation (IE) condition. 

Our second contribution is applying the proposed scheme to visual inertial navigation, arising for in Global Positioning System (GPS)-denied environment, which is concerned with fusing the information from IMUs and cameras, using the inertial coordinates of some feature points in a known map. Many types of Kalman filters have became the \emph{de facto} standard algorithms in industrial applications \cite{MOUROU}, which, however, are intrinsically local. In order to enlarge domain of attraction, many researchers from the nonlinear control community have made important contributions in recent years \cite{BERetal,WANetal}. In this paper, we provide a simple, almost globally asymptotically convergent solution only requiring the trajectory being IE. The relaxed excitation condition is especially important for the case with unknown gravitational constant and sensor bias, for which it is quite strict to impose the uniformity of excitation with respect to time. 


{\em Notation.} We use generally $\et$ to represent exponentially decaying terms with proper dimensions. $0_n \in \rea^n$ and $0_{n\times m}\in \rea^{n\times m}$ denote the zero column vector of dimension $n$ and the zero matrix of dimension $n\times m$, respectively. We use $p$ to represent the differential operator $p:= {d\over dt}[\cdot]$, and $|\cdot|$ as the Euclidean norm of a vector. Given a square matrix $A\in \rea^{n\times n}$, the Frobenius norm is defined as $\|A\| = \sqrt{\tr(A^\top A)}$, and $\adj(A)$ and $\det(A)$ denote its adjugate matrix and determinant, respectively. We use $SO(3)=\{R\in \rea^{3\times3}|R^\top R = I_3, ~ \det(R) =1\}$ to represent the special orthogonal group, and ${\mathfrak {so}}(3)$ is its Lie algebra. Given a variable $R\in SO(3)$, we use $|R|_I$ to represent the normalized distance on $SO(3)$ with $|R|_I^2:= {1\over4}\tr(I_3 - R) $. For any $x\in \rea^3/\{0\}$, its projector is defined as
$
\Pi_x :=I_3 -{1\over|x|^2}xx^\top.
$
Given $a \in \rea^3$, we define the operator $(\cdot)_\times$ as
$
a_\times := \begmat{ 0 & - a_3 & a_2 \\  a_3 & 0 & -a_1 \\ -a_2 & a_1 & 0 } \in {\mathfrak {so}}(3) .
$


\section{Model and Problem Formulation}
\label{sec2}


\begin{figure}[!htb]
    \centering
    \includegraphics[width = 0.7\linewidth]{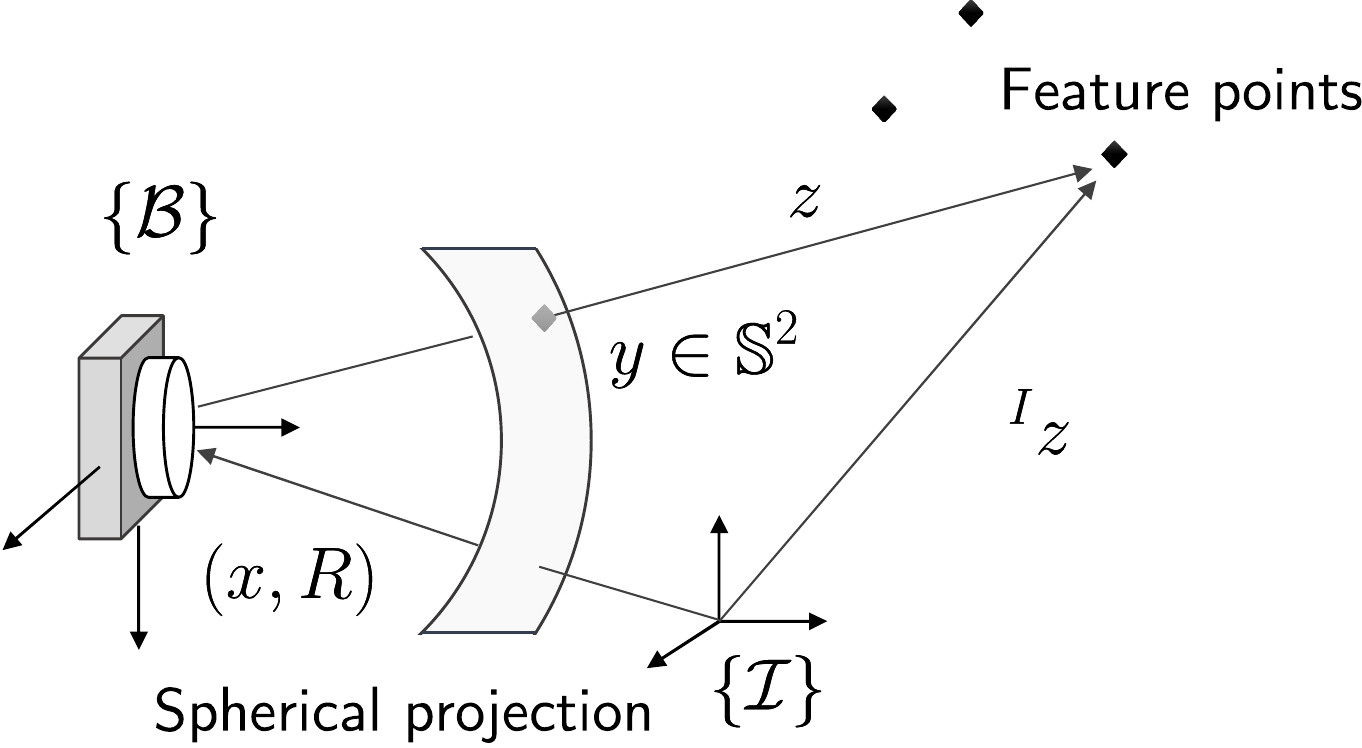}
    \caption{Coordinate systems of a robot observing fixed points}
    \label{fig:coordinate}
\end{figure}

A single monocular camera provides 2D images of its environment, and we focus the 3D position estimation of \emph{point} features extracted from images in the view of a camera on the mobile robot, which is equipped with IMUs. To be precise, in the body-fixed frame $\{\calb\}$ the observed feature point is denoted by $z\in \rea^3$, satisfying
$
	z= R^\top ({}^I z - x)
$
with unknown position $x\in \rea^3$ of the robot in the inertial frame $\{\cali\}$, and $R\in SO(3)$ is the attitude of $\{\calb\}$ with respect to $\{\cali\}$. Here, the constant vector ${}^I z \in \rea^3$ represents the feature position in $\{\cali\}$. The dynamics of $z$ is given by
\begin{equation}
\label{dot_z}
\dot z = - \Omega_\times z - v,
\end{equation}
in which $\Omega \in \rea^3$ is the rotational velocity, and $v\in \rea^3$ is the linear velocity, both in the body-fixed frame $\{\calb\}$; see Fig. \ref{fig:coordinate}. The kinematics of the robot is given by
\begin{equation}
\label{dot_xR}
\begin{aligned}
	\dot x  = {}^I v, \quad
	\dot R  = R \Omega_\times
\end{aligned}
\end{equation}
in which ${}^I v$ is the velocity in $\{\cali\}$, {\em i.e.}
$
{}^I v := Rv.
$
We have
\begin{equation}
\label{dot_v}
\dot v= -\Omega_\times v + a + b_a + R^\top g,
\end{equation}
in which $a\in \rea^3$ is the ``apparent acceleration'' representing all non-gravitational forces in $\{\calb\}$ measured by IMUs, $g:= [0,0,\mathsf{g}]^\top$ is the gravity vector in $\{\cali\}$, and $b_a\in \rea^3 $ is the constant sensor bias. In this paper, we consider the \emph{spherical projection} model of the camera. In this case, the output is the bearing of the feature point in the body-fixed frame $\{\calb\}$, {\em i.e}.
\begin{equation}
\label{y}
y = {z \over |z|} \in \mathbb{S}^2,
\end{equation}
for non-zero $z$. We make the following assumption.

\begin{assumption}
\label{ass:1} The origin of the robot never coincides with the feature point, {\em i.e.} $|z(t)| \neq 0$ for all $t\ge 0$. Besides, $\Omega$ and $a$ guarantee the systems state bounded over time.
\end{assumption}

The proposed approach is applicable to other image-based estimation problems with different projection models. 
Now let us recall some definitions and a technical lemma as follows. Note that the IE condition is called ``exciting over a finite time interval'' in \cite[pp. 108]{TAO}. The swapping lemma for stable filters is useful for dealing with unavailable ``dirty derivatives'' to generate linear regressors.

\begin{definition}\label{def1}\rm
Given a bounded signal $\phi:\rea_+ \to \rea^n$, it is 1) PE if
$
     \int_{t}^{t+T} \phi(s)\phi^\top(s) ds \succeq \delta I_n, ~ \forall t\ge 0
$
    for some $T>0,\delta >0$; 2) IE if there exist $t_0 , t_c \ge 0$ such that
$
    \int_{t_0}^{t_0+t_c} \phi(s)\phi^\top(s) ds \succeq \delta I_n
$
    for some $\delta >0$.
\end{definition}


\begin{lemma}
\label{lem:swapping}
\rm ({\em Swapping lemma} \cite{SASBOD}) For $C^1$-differential signals $x,y: \rea_{\ge 0} \to \rea$, the following holds
\begin{equation}
\label{swap}
{\alpha \over p+ \alpha}[xy]
=
y {\alpha \over p + \alpha}[x] - {1\over p+\alpha} \left[\dot y {\alpha\over p+\alpha}[x] \right]
\end{equation}
for any $\alpha>0$ with $p:={d/dt}$.
\end{lemma}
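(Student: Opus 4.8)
The plan is to verify this filter identity by showing that both sides satisfy the same stable first-order differential equation driven by the same signal, and then to match initial conditions. First I would unpack the operator notation through auxiliary signals. Writing $f := \frac{\alpha}{p+\alpha}[x]$ means precisely that $f$ solves $\dot f + \alpha f = \alpha x$, and writing $g := \frac{1}{p+\alpha}\left[\dot y\, f\right]$ means $\dot g + \alpha g = \dot y\, f$. With these, the right-hand side of \eqref{swap} is the composite signal $\calr := y f - g$, while the left-hand side $\call := \frac{\alpha}{p+\alpha}[xy]$ satisfies $\dot\call + \alpha\call = \alpha xy$ by definition of the filter.

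Next I would simply differentiate $\calr$. By the product rule and the defining equations for $f$ and $g$, one gets $\dot\calr = \dot y\, f + y\dot f - \dot g = \dot y\, f + y(-\alpha f + \alpha x) - (\dot y\, f - \alpha g) = -\alpha y f + \alpha xy + \alpha g$. Adding $\alpha\calr = \alpha y f - \alpha g$ then collapses everything to $\dot\calr + \alpha\calr = \alpha xy$, which is exactly the equation obeyed by $\call$. Hence the two sides of \eqref{swap} are governed by identical stable dynamics and an identical forcing term.

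Finally I would reconcile the initializations. Under the standard convention that all stable filters are started from rest, $f(0) = 0$ and $g(0) = 0$, so $\calr(0) = y(0)f(0) - g(0) = 0 = \call(0)$. Since a scalar linear ODE with a fixed driving signal and a fixed initial value has a unique solution, it follows that $\call \equiv \calr$ on $\rea_{\ge 0}$, which is the asserted identity \eqref{swap}.

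The derivation is entirely elementary, so I do not anticipate any real obstacle; the one point that warrants care is the bookkeeping of initial conditions. Read as a pure convolution operator (equivalently in the Laplace domain), $\frac{\alpha}{p+\alpha}$ makes \eqref{swap} an exact operator equality with no leftover term. For a generic, nonzero filter initialization the two sides would instead differ by a term of type $\et$, an exponentially decaying transient, which is harmless for the way such filters are deployed elsewhere in the paper.
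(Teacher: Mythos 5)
Your proof is correct. Note that the paper itself offers no proof of this lemma at all --- it is imported verbatim from the reference [SASBOD] (Sastry--Bodson) --- so there is no ``paper route'' to compare against; what you have produced is a legitimate self-contained verification. Your argument is the cleanest elementary one available: realize each filter as a first-order ODE, show by the product rule that the right-hand side $\calr = yf - g$ satisfies $\dot\calr + \alpha\calr = \alpha xy$, i.e.\ exactly the defining equation of the left-hand side, and conclude by uniqueness of solutions of a scalar linear ODE with matched initial data. The algebra checks out: $\dot\calr = \dot y f + y(-\alpha f + \alpha x) - (-\alpha g + \dot y f) = -\alpha yf + \alpha xy + \alpha g$, and adding $\alpha\calr$ cancels the $f$ and $g$ terms. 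Your closing caveat is also exactly the right one, and it matches how the paper actually deploys the lemma: in the proof of Proposition 1 the filters carry nonzero initial conditions, and the resulting discrepancy is precisely the exponentially decaying term $\et$ that the paper tracks explicitly (and later kills in Proposition 3 by insisting on zero filter initializations, cf.\ Remark 2). The only point worth stating slightly more carefully is that $C^1$ regularity of $x$ and $y$ is what licenses the product-rule step and guarantees $f$ and $g$ are themselves $C^1$, so the ODE uniqueness argument applies on all of $\rea_{\ge 0}$; you use this implicitly and it is exactly the hypothesis the lemma provides.
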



We are interested in the position estimation problem below.

\begin{prob}
\rm
Consider the system \eqref{dot_z}-\eqref{dot_v} with $\Omega$ and $y$ available.
\begin{itemize}[topsep=0pt,parsep=0pt,partopsep=0pt]
\item[\bf P1] If the velocity $v$ is measurable, design an observer
$
	\dot \zeta  = N(\zeta,\Omega,v,y)
$,
$
	\hat z  = H(\zeta, \Omega,v,y)
$;
or
\item[\bf P2] If only $a$ is measurable with $b_a$ and $g$ unknown, design an observer
$
	\dot \zeta  = N(\zeta,\Omega,a,y),~
	(\hat z,\hat v)  = H(\zeta, \Omega,a,y)
$
\end{itemize}
with the observer state $\zeta \in \rea^{n_\zeta}$ and the mappings $N$ and $H$ of proper dimensions, such that
$
\lim_{t\to\infty} |\hat z(t)-z(t)|=0
$
for a class of robot trajectories.

\end{prob}


%
\section{Position Observer with Velocity Information}
\label{sec3}
%

To start with, in this section we give a solution to {\bf P1} as a motivating design, and in the next section, we will focus on {\bf P2} without the information of $v$. 

\subsection{Generation of Linear Regression Models}
\label{sec31}

For {\bf P1} we show that position estimation is equivalent to identification of a constant scalar parameter. In the following we present a new parameterization to the range, and show how to generate a linear regression equation (LRE).

\begin{proposition}
\label{prop:1}\rm
Consider the dynamics \eqref{dot_z} and the dynamic extension
\begin{equation}
\label{dyn_ext:r}
\dot \xi = - y^\top v
\end{equation}
with arbitrary $\xi(0)\in \rea$. Then, the position $z$ satisfies
\begin{equation}
\label{id:z}
z(t) = [\xi(t) + \theta]y, \quad \forall t \ge 0,
\end{equation}
in which the unknown constant $\theta \in \rea$ verifies the LRE
\begin{equation}
\label{lre:r}
y_{\tt R} = \phi\theta + \et
\end{equation}
with an exponentially decaying term $\et$, measurable signals
\begin{equation}
\label{phi:r}
\begin{aligned}
\phi & ~:= ~G_1[y] + \alpha G_2[\Omega_\times y]
\\
y_{\tt R} & ~:=~ -  G_2[\alpha\Pi_y v + y^\top v \phi]  - \phi \xi,
\end{aligned}
\end{equation}
the filters
$
G_1[\cdot] := {\alpha p \over p+\alpha}[\cdot]
$
,
$
G_2[\cdot] := {1 \over p+ \alpha}[\cdot],
$
and $\alpha>0$.
\end{proposition}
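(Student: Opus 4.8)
The plan is to prove the statement in two stages: first establish the state parameterization \eqref{id:z} directly from the kinematics, and then derive the linear regression equation \eqref{lre:r} by filtering the resulting identity and invoking the swapping lemma (Lemma~\ref{lem:swapping}) to remove the unmeasurable derivative $\dot y$. For the first stage I would show that the range $|z|$ obeys \emph{exactly} the scalar dynamics \eqref{dyn_ext:r}. Differentiating $|z|^2 = z^\top z$ along \eqref{dot_z} gives $\frac{d}{dt}|z|^2 = 2z^\top(-\Omega_\times z - v)$, and since $z^\top \Omega_\times z = 0$ by skew-symmetry this reduces to $|z|\frac{d}{dt}|z| = -z^\top v = -|z|\,y^\top v$, hence $\frac{d}{dt}|z| = -y^\top v$. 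Comparing with \eqref{dyn_ext:r}, the scalar $|z(t)| - \xi(t)$ is constant; setting $\theta := |z(0)| - \xi(0)$ and using $z = |z|y$ (well-defined by Assumption~\ref{ass:1}) yields \eqref{id:z} for all $t\ge 0$.

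For the second stage I would differentiate the identity $z = (\xi+\theta)y$ and substitute both \eqref{dot_z} and $\dot\xi = -y^\top v$. Using $|y|=1$, so that $\Pi_y = I_3 - yy^\top$ and $-v + (y^\top v)y = -\Pi_y v$, the result collapses to the compact relation $(\xi+\theta)(\dot y + \Omega_\times y) = -\Pi_y v$. Isolating the constant gives $\theta(\dot y + \Omega_\times y) = -\Pi_y v - \xi(\dot y + \Omega_\times y)$, which is already a regression in the scalar $\theta$ — except that both sides still contain the ``dirty derivative'' $\dot y$, which is not measurable.

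The crux is eliminating $\dot y$ while preserving $\theta$ as a constant multiplier, and this is exactly what the filter bank and Lemma~\ref{lem:swapping} accomplish. I would apply the stable filter $\frac{\alpha}{p+\alpha} = \alpha G_2$ to both sides. On the left, since $\theta$ is constant and $\frac{\alpha}{p+\alpha}[\dot y] = \frac{\alpha p}{p+\alpha}[y] = G_1[y]$ modulo a filter transient, the left side becomes $\theta\bigl(G_1[y] + \alpha G_2[\Omega_\times y]\bigr) + \et = \theta\phi + \et$, recovering the regressor $\phi$ of \eqref{phi:r}. On the right, the term $-\alpha G_2[\Pi_y v]$ appears at once, while the remaining product $\frac{\alpha}{p+\alpha}[\xi(\dot y + \Omega_\times y)]$ has the time-varying coefficient $\xi$ multiplying a signal whose filtered value is $\phi$. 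Here Lemma~\ref{lem:swapping}, applied with $\xi$ in the role of the commuted factor, moves $\xi$ outside the filter at the price of a correction term $-G_2[\dot\xi\,\phi]$, and crucially $\dot\xi = -y^\top v$ is measurable. Collecting the terms gives $-G_2[\alpha\Pi_y v + y^\top v\,\phi] - \phi\xi = y_{\tt R}$, and equating the two filtered sides produces \eqref{lre:r}.

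Throughout, every term generated by nonzero filter initial conditions, and every product of a bounded signal with an exponentially decaying transient, is absorbed into $\et$; the boundedness required for stable filtering to preserve exponential decay is supplied by Assumption~\ref{ass:1}. I expect the only genuinely delicate point to be the bookkeeping in the swapping step: one must verify that the identity $\frac{\alpha}{p+\alpha}[\dot y] = G_1[y]$ and Lemma~\ref{lem:swapping} are both applied with the operator $\frac{\alpha}{p+\alpha}$ (rather than $G_2$ alone), so that the $\alpha$ factors appearing in $\phi$ and in $\alpha\Pi_y v$ match the statement exactly; the surrounding manipulations are routine algebra.
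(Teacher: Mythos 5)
Your proposal is correct and follows essentially the same route as the paper: the same range parameterization $|z|-\xi\equiv\theta$, the same intermediate identity $r(\dot y+\Omega_\times y)=-\Pi_y v$ (which you obtain by differentiating $z=(\xi+\theta)y$ rather than $y=z/|z|$), and the same filtering step via Lemma~\ref{lem:swapping}. The only cosmetic difference is that you substitute $r=\xi+\theta$ \emph{before} filtering and swap with $\xi$ as the commuted factor, whereas the paper swaps with $r$ and substitutes at the end; since $\dot r=\dot\xi=-y^\top v$, the two computations produce identical terms.
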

\begin{proof}\rm
From $|z|^2 = z^\top z$ and \eqref{dot_z}, we have
$
|z| \dot{\ob{|z|}}= z^\top (-\Omega_\times z- v) = z^\top v,
$
and thus the bearing of $z$ satisfies 
$
{d\over dt}|z| = -y^\top v.
$
Note that the time derivative of state in the above coordinate is measurable. Thus, the dynamic extension \eqref{dyn_ext:r} guarantees
$
|z(t)| - \xi(t)= \theta, ~ \forall t\ge 0,
$
with $\xi$ an available signal, in which the constant scalar parameter $\theta$ is defined as $\theta:=  |z(0)| - \xi(0)$. Hence, we have verified \eqref{id:z}.

In the remainder, we show how to derive the LRE \eqref{lre:r}. By calculating the time derivatives of \eqref{y}, we obtain
$$
\begin{aligned}
	\dot y & = {1\over |z|^2}
	\Big[ (-\Omega_\times z - v)|z| + zy^\top v\Big]
	\\
	& = - \Omega_\times y - {1\over |z|} v + {1\over|z|}yy^\top v
	 = - \Omega_\times y - {1\over |z|} \Pi_y v,
\end{aligned}
$$
equivalently,
\begin{equation}
\label{id:r}
r(\dot y + \Omega_\times y) = -\Pi_y v,
\end{equation}
where we have defined the variable
$
    r = |z|
$
for convenience. Noting the unavailability of the derivative $\dot y$, we thus apply the filter ${\alpha \over p + \alpha}$ to both sides of \eqref{id:r} with $\alpha >0$. For the first term, we utilize the swapping lemma for filters --- see Lemma \ref{lem:swapping} --- and then obtain
$$
\begin{aligned}
{\alpha \over p+\alpha}[r\dot y]
& ~=~ r{\alpha p \over p+\alpha}[y] - {1\over p +\alpha}
\left[ \dot r {\alpha \over p+\alpha}[\dot y]\right] + \et
\\
& ~=~
r{\alpha p \over p+\alpha}[y] + {1\over p +\alpha}
\left[ y^\top v {\alpha \over p+\alpha}[\dot y]\right]+\et,
\end{aligned}
$$
in which the exponentially decaying term $\et$ arises from the initial conditions of filters. To be precise, the state-space model of the filter ${\alpha \over\alpha + p}[\cdot]$ is given by $\dot\bfy = - \alpha \bfy + \alpha u$ with input $u$ and output $\bfy$. For non-zero $\bfy(0)$, it yields an exponentially decaying term $e^{-\alpha t}\bfy(0)$. Therefore, we have
$$
\begin{aligned}
r{\alpha p \over p+\alpha}[y] + {1\over p +\alpha}
\left[ y^\top v {\alpha \over p+\alpha}[\dot y]\right]
+
{\alpha \over p+\alpha}[r\Omega_\times y]
\\
=
-{\alpha \over p+\alpha}[\Pi_y v] + \et,
\end{aligned}
$$
then
\begin{equation}
\label{filter_r}
\begin{aligned}
r \phi
+ G_2[y^\top v \phi]
= - \alpha G_2[\Pi_y v] + \et.
\end{aligned}
\end{equation}
Substituting $r =\xi+\theta$ into the above equation, we have the LRE \eqref{lre:r}. 
\qed
\end{proof}

Thanks to the algebraic relation \eqref{id:z}, we have translated, via designing the dynamic extension \eqref{dyn_ext:r}, the estimation of the position $z$ into the on-line consistent identification of $\theta$ from the LRE \eqref{lre:r}. Note that $\theta$ is a scalar constant, and $\phi$ is a column vector, and thus the least squares problem is solvable if $\phi(t_\star)$ is non-zero for some $t_\star>0$. We make the following assumption.
\begin{assumption}
\label{ass:a1}
There exists $t_\star >0$ such that 
$
 |\phi(t_\star)| \neq 0.
$
\end{assumption}
%
%

%

%
\subsection{Position Observer Design}
\label{sec:3b}
Based on the LRE \eqref{lre:r}, we design in this section two globally exponentially convergent position observers to address {\bf P1}, {\em i.e.}, a gradient observer and a PEBO. In the former a PE condition is imposed to guarantee convergence; in contrast for the latter, we relax the requirement significantly.


\begin{proposition}\rm
\label{prop:2}({\em Gradient position observer}) Consider \eqref{dot_z} with $v$ and $y$ available. If $\phi^\top$ in \eqref{phi:r} is PE, then the observer
\begin{equation}
\label{grad_obs}
\begin{aligned}
	\dot{\hat r} & = - y^\top v - \gamma\phi^\top \Big(\phi \hat r + G_2[y^\top v \phi] + \alpha G_2[\Pi_y v]\Big)
	\\
	\hat z & = \hat ry
\end{aligned}
\end{equation}
with $\gamma>0$, provides
a globally exponentially convergent estimate to the position $z$, {\em i.e.},
\begin{equation}
\label{converge:z}
	\lim_{t\to\infty} |\hat z(t) - z(t)| =0 \quad \mbox{(exp.)}.
\end{equation}
\end{proposition}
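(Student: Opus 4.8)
The plan is to reduce the vector estimation error to a scalar range error, extract a clean scalar time-varying linear error system by eliminating the filtered regressor terms through the identity \eqref{filter_r}, and then conclude via the classical ``PE implies exponential stability'' argument for scalar gradient estimators.

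First I would observe that, by \eqref{id:z} and $r=|z|$, the true range satisfies $z = r y$ with $|y|=1$; hence the estimation error collapses to a scalar one,
\[
|\hat z - z| = |(\hat r - r)\,y| = |\hat r - r|,
\]
so it suffices to prove that $e:=\hat r - r$ tends to zero exponentially. Recalling from the proof of Proposition~\ref{prop:1} that $\dot r = -y^\top v$, I would subtract this from the $\hat r$-dynamics in \eqref{grad_obs}; the two $-y^\top v$ terms cancel, leaving
\[
\dot e = -\gamma\,\phi^\top\!\Big(\phi\hat r + G_2[y^\top v \phi] + \alpha G_2[\Pi_y v]\Big).
\]

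The key step is then to use \eqref{filter_r}, written as $G_2[y^\top v\phi] + \alpha G_2[\Pi_y v] = -r\phi + \et$, to replace the two filtered terms. Substituting yields $\phi\hat r + G_2[y^\top v\phi] + \alpha G_2[\Pi_y v] = \phi(\hat r - r) + \et = \phi e + \et$, and using $\phi^\top\phi = |\phi|^2$ gives the scalar error system
\[
\dot e = -\gamma|\phi|^2 e - \gamma\,\phi^\top\et .
\]
For the nominal part $\dot e = -\gamma|\phi|^2 e$ there is an explicit solution $e(t)=e(t_0)\exp\!\big(-\gamma\!\int_{t_0}^t|\phi|^2\,ds\big)$, and the PE hypothesis on $\phi$ delivers $\int_t^{t+T}|\phi(s)|^2\,ds = \tr\!\big(\int_t^{t+T}\phi\phi^\top ds\big) \ge n\delta >0$ for all $t$, which is exactly the uniform lower bound needed for uniform exponential stability of this first-order system. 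Finally, Assumption~\ref{ass:1} bounds the system state and hence $\phi$ (an output of the stable filters $G_1,G_2$ driven by bounded signals), so $\gamma\phi^\top\et$ is an exponentially decaying perturbation; applying variation of constants (or a standard ISS estimate, noting the convolution of two exponentially decaying factors) then gives $e\to 0$ exponentially, which establishes \eqref{converge:z}.

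I expect the main obstacle to be the elimination step: recognizing that the seemingly complicated correction term in \eqref{grad_obs} is precisely $\phi$ times the residual of \eqref{filter_r}, so that it reduces to $-\gamma|\phi|^2 e$ up to an exponentially small term. Once this cancellation is in hand, the remaining analysis is the textbook gradient argument, and the only care needed is to track the $\et$ contributions arising from the filter initial conditions and to confirm that the scalar excitation lower bound follows from the matrix PE definition of $\phi$.
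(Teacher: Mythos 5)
Your proof is correct, and up to the final stability step it coincides with the paper's own argument: the paper likewise defines $\tilde r := \hat r - r$, cancels the $-y^\top v$ terms, and invokes \eqref{filter_r} to recognize the bracketed correction in \eqref{grad_obs} as $\phi\tilde r$ plus a decaying residual, arriving at $\dot{\tilde r} = -\gamma \phi^\top\phi\, \tilde r + \et$ --- exactly your $\dot e = -\gamma|\phi|^2 e - \gamma\phi^\top\et$ with the bounded factor $\gamma\phi^\top$ absorbed into the decaying term. Where you genuinely diverge is in how the vanishing perturbation is handled. The paper runs a general-purpose Lyapunov perturbation argument: GES of the nominal system $\dot{\tilde r}=-\gamma\phi^\top\phi\tilde r$ under PE (citing the standard adaptive-control theorem), a converse Lyapunov function $V$ satisfying the usual quadratic bounds, the augmented function $W = V + \int_t^{\infty}{c_4^2\over 2c_3}|\et(s)|^2 ds$, and the comparison lemma. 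You instead exploit scalarity: the transition function is explicitly $\exp\big(-\gamma\int_s^t |\phi(\tau)|^2 d\tau\big)$, PE converts it into a uniform bound $C e^{-\lambda(t-s)}$, and variation of constants plus a convolution-of-two-exponentials estimate finishes the proof. Both are rigorous; yours is more elementary and exhibits the decay rate explicitly, whereas the paper's argument is the one that transfers verbatim to vector-valued parameter errors, where no closed-form transition matrix is available. One bookkeeping remark: since the LRE \eqref{lre:r} has a scalar unknown $\theta$, the hypothesis that ``$\phi^\top$ is PE'' should be read as the scalar condition $\int_t^{t+T}|\phi(s)|^2 ds \ge \delta$ for all $t\ge 0$; you instead posit the stronger matrix condition $\int_t^{t+T}\phi\phi^\top ds \succeq \delta I_3$ and recover the scalar bound by taking the trace. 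Nothing breaks --- the scalar bound is all your argument uses, and it holds under either reading --- but you should quote the scalar form directly so as not to silently strengthen the standing assumption.
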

\begin{proof}\rm
See Appendix.
\qed
\end{proof}

The methodology of gradient observers may date back to \cite{SHI} for general nonlinear models, which has been extended to several applications, {\em e.g.}, electric motors \cite{ORTYI} and pose estimation \cite{LAGetal}. Its key step is to obtain a regression model of unknown states, and PE conditions are imposed to achieve uniform convergence. In many cases, it may not be guaranteed for given robot trajectories. To address this, we design a PEBO under a weaker excitation requirement.

\begin{proposition}\rm
\label{prop:3} ({\em Position PEBO}) Considering the dynamics \eqref{dot_z} with $v$ and $y$ available, the observer consisting of \eqref{dyn_ext:r} and
\begin{equation}
\label{pebo:1}
\begin{aligned}
	\dot \zeta & ~=~ \phi^\top y_{\tt R} - \phi^\top\phi \zeta \\
	\dot \omega & ~=~ - \phi^\top \phi \omega, \quad \omega(0)=1
	\\
	\dot {\hat\theta } &~ =~ \gamma [ (\zeta - \omega\zeta_0) - (1-\omega) \hat \theta ] \\
	\hat z &~=~ (\xi+\hat\theta)y
\end{aligned}
\end{equation}
with $\zeta(0)= \zeta_0$, $y_{\tt R}, \phi$ defined in \eqref{phi:r}, and the filters $G_1[\cdot]$ and $G_2[\cdot]$ starting from zero initial conditions, guarantees the global exponential convergence \eqref{converge:z} if Assumption \ref{ass:a1} holds.
\end{proposition}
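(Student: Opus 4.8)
The plan is to reduce the estimation of the position $z$ to the on-line identification of the single constant scalar $\theta$ in the reparameterization $z=(\xi+\theta)y$ of Proposition~\ref{prop:1}, and then to recognize the cascade \eqref{pebo:1} as a dynamic-regressor extension that converts the (vector, possibly degenerate) LRE \eqref{lre:r} into a scalar LRE whose regressor is sign-definite and monotone. First I would left-multiply \eqref{lre:r} by $\phi^\top$ to obtain the scalar identity $\phi^\top y_{\tt R}=(\phi^\top\phi)\theta+\et$, where the new scalar regressor $\psi:=\phi^\top\phi=|\phi|^2\ge 0$ is measurable and the perturbation stays exponentially decaying because $\phi$ is bounded. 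This scalar signal is exactly what drives both $\zeta$ and $\omega$ in \eqref{pebo:1}.

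The core algebraic step is to integrate the $\zeta$- and $\omega$-filters simultaneously. Consider the auxiliary signal $u:=\zeta-\omega\zeta_0-(1-\omega)\theta$; differentiating and using $\dot\zeta=\phi^\top y_{\tt R}-\psi\zeta$, $\dot\omega=-\psi\omega$ together with the scalar LRE above gives $\dot u=-\psi\,u+\et$, while $u(0)=\zeta_0-\omega(0)\zeta_0-(1-\omega(0))\theta=0$. Hence, modulo exponentially decaying terms, $\zeta-\omega\zeta_0=(1-\omega)\theta$; that is, \eqref{pebo:1} has manufactured a new scalar LRE $\mathcal Y=(1-\omega)\theta+\et$ with $\mathcal Y:=\zeta-\omega\zeta_0$, whose regressor $1-\omega$ is now non-negative and non-decreasing. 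Substituting into the $\hat\theta$-update and setting $\tilde\theta:=\hat\theta-\theta$ yields the clean error equation $\dot{\tilde\theta}=-\gamma(1-\omega)\tilde\theta+\et$, so that $\tilde\theta(t)=\tilde\theta(0)\exp\!\big(-\gamma\!\int_0^t(1-\omega(s))\,ds\big)$ up to a vanishing term.

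It then remains to prove $\int_0^\infty (1-\omega(s))\,ds=\infty$ under Assumption~\ref{ass:a1}, from which $\tilde\theta\to0$ exponentially, and finally $|\hat z-z|=|\tilde\theta|\,|y|=|\tilde\theta|\to0$ since $|y|=1$. For this I would use that $\omega(t)=\exp(-\int_0^t\psi)$ is strictly positive and non-increasing with $\omega(0)=1$; the interval-excitation hypothesis $|\phi(t_\star)|\neq0$ together with continuity of $\phi$ gives $\int_0^{t}\psi>0$, hence $\omega(t)\le\bar\omega<1$ for all $t$ beyond some $t_1$, whence $1-\omega(t)\ge 1-\bar\omega>0$ eventually and the integral diverges at least linearly. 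This is exactly where only IE, rather than PE, is needed: once $\omega$ has dropped below $1$ on a single interval it stays there, so $1-\omega$ keeps accumulating even if $\phi$ later vanishes, and the transition $\exp(-\gamma\!\int_s^t(1-\omega))$ becomes uniformly exponentially contracting for $s\ge t_1$.

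The step I expect to be the main obstacle is the rigorous bookkeeping of the exponentially decaying term $\et$ inherited from \eqref{lre:r}, which is also the reason \eqref{pebo:1} insists that $G_1,G_2$ start from zero initial conditions. The delicate point is that the $\et$ entering $\dot u=-\psi\,u+\et$ is propagated through the filter of gain $\psi$, whose transition $\exp(-\int_s^t\psi)$ is only marginally, not uniformly, contracting under pure IE, since $\omega$ may plateau at $\bar\omega>0$. One must therefore argue that its contribution to $\mathcal Y$, and hence to the forcing of the $\tilde\theta$-equation, still enters as a genuinely vanishing perturbation rather than a persistent bias, before invoking the standard vanishing-perturbation estimate on the uniformly exponentially stable $(1-\omega)$-filter. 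Cleanly separating these two filtering stages — the degenerate $\psi$-filter and the eventually exponentially stable $(1-\omega)$-filter — and controlling the transients through the former is the heart of the argument, and it is precisely here that the relaxation from PE to IE has to be earned.
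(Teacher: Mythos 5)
Your overall route coincides with the paper's: use Proposition \ref{prop:1} to reduce the problem to identifying the scalar $\theta$; observe that the $(\zeta,\omega)$ subsystem of \eqref{pebo:1} manufactures the new regression $\zeta-\omega\zeta_0=(1-\omega)\theta$ (this is \eqref{lre:new1} in the paper); show that IE of $\phi$ plus continuity forces $1-\omega(t)\ge\delta_0>0$ after a finite time, so the regressor $(1-\omega)$ is PE; and conclude exponential convergence of $\tilde\theta$ from the scalar error equation \eqref{t_theta}, then of $\hat z$ since $|y|=1$. All of that matches the paper and is sound.

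The genuine gap is your treatment of $\et$. You carry the perturbation through, asserting that $u:=\zeta-\omega\zeta_0-(1-\omega)\theta$, which obeys $\dot u=-\psi u+\phi^\top\et$ with $u(0)=0$ and $\psi:=|\phi|^2$, is itself exponentially vanishing, so that the new LRE holds ``modulo exponentially decaying terms.'' Under IE only, this is false: $\int_0^\infty\psi(s)\,ds$ can be finite, so the transition factor $e^{-\int_s^t\psi}$ is bounded \emph{below} by a positive constant, and $u(t)\to\int_0^\infty e^{-\int_s^\infty\psi}\,\phi^\top(s)\et(s)\,ds$, which is generically a nonzero constant. The $\hat\theta$-update then satisfies $\dot{\tilde\theta}=-\gamma(1-\omega)\tilde\theta+\gamma u$ and converges to the biased value $u_\infty/(1-\omega_\infty)\neq0$, not to zero. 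You correctly sense this danger in your last paragraph (the ``persistent bias'' worry) but leave it unresolved---and it cannot be resolved by any vanishing-perturbation estimate, precisely because the perturbation's effect does not vanish through the degenerate $\psi$-filter. The paper's resolution is the hypothesis you quote but never actually use: with $G_1[\cdot]$ and $G_2[\cdot]$ initialized at zero (implemented as in Remark \ref{rem:zero}), the term $\et$ in \eqref{lre:r} is \emph{identically zero}, since every decaying term in the derivation of Proposition \ref{prop:1} originates from filter initial conditions. Then $u\equiv0$, the regression \eqref{lre:new1} is exact rather than approximate, and your argument closes; without this observation, the proof as written does not establish convergence to the true $\theta$.
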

\begin{proof}\rm
From Proposition \ref{prop:1}, we have
$
r \equiv \xi + \theta.
$
We need to verify $\hat\theta \to \theta$ as $t\to\infty$. The following analysis is motivated by the proof in our previous work \cite[Proposition 1]{YIetalCDC}, where an LTV filter is designed to generate PE regressors from the ones only satisfying IE; see \cite{WANORTetal,BOBetal} for extensions. It yields
$$
\begin{aligned}
	{d\over dt}(\zeta - \theta) =~\phi^\top y_{\tt R} - \phi^\top \phi \zeta
	=
	- \phi^\top \phi (\zeta - \theta),
\end{aligned}
$$
in which we have used \eqref{lre:r}, and the fact that $\theta$ is a constant. It is underlined that, by setting zero initial conditions of the filters $G_1[\cdot]$ and $G_2[\cdot]$, the decaying term $\et$ disappears in \eqref{lre:r}; see Remark \ref{rem:zero} on how to implement it. It is easy to get
$
\zeta - \theta = \omega(\zeta_0 - \theta)
$
with $\omega(t) = \exp(-\int_0^t \phi^\top(s) \phi(s) ds)$, then yielding a new linear regression model
\begin{equation}
\label{lre:new1}
 \zeta - \omega \zeta_0 = (1-\omega)\theta.
\end{equation}
Now, let us show that the new regression $(1-\omega)$ is PE if Assumption \ref{ass:a1} holds. From the continuity of the signal $\phi$, Assumption \ref{ass:a1} implies the existence of a sufficiently small parameter $\epsilon>0$ such that
$
\int_{t_\star}^{t_\star +\epsilon} \phi^\top(s)\phi(s)ds >0.
$
Hence, we conclude that $\phi^\top$ is IE. The solution of $\omega$ satisfies for $t\in [0,t_\star+\epsilon]$, $1-\omega(t) \ge 0$; and for $t> t_\star + \epsilon$
$$
\begin{aligned}
1 - \omega(t) & = 1- \exp\left(
-\int_0^t \phi^\top(s) \phi(s) ds
\right)
\\
& \ge 1 -  \exp\left(
-\int_{t_\star}^{t_\star + \epsilon} \phi^\top(s) \phi(s) ds
\right)
> \delta_0,
\end{aligned}
$$
for some constant $\delta_0>0$. Hence, the regression $(1-\omega)$ in \eqref{lre:new1} is non-negative and PE. We define the parameter estimation error $\tilde \theta:= \hat\theta - \theta$, the dynamics of which is
\begin{equation}
\label{t_theta}
\dot{\tilde \theta} = - \gamma (1-\omega) \tilde\theta, \quad \gamma >1.
\end{equation}
Using the Cauchy–Schwarz inequality for integrals, we conclude that $(1-\omega)^{1\over 2}$ is PE. The global exponential stability of the LTV dynamics \eqref{t_theta} is established \cite[Thm 2.5.1]{SASBOD}. Then, we have $\lim_{t\to\infty} |\xi(t) + \hat\theta(t) - r(t)| =0 $ exponentially. Invoking state boundedness, we complete the proof.\qed
\end{proof}

\begin{remark}\rm
In the new regressor \eqref{lre:new1}, the signal $(1-\omega)\in[0,1)$ --- intuitively, this value characterizes how ``strong'' the excited signal is --- as a result limiting the convergence speed of the observer. A possible way to overcome this issue is to mix \eqref{lre:new1} with the original regressor \eqref{lre:r}, and then obtaining a new LRE
$
 {y}_{\tt R}' = {\phi}'\theta,
$
with ${y}_{\tt R}':=\zeta- \omega\zeta_0 + k_p\phi^\top y_{\tt R}$ and the new regression $\phi':= (1-\omega) + k_p |\phi|^2$, which is not necessarily less than one. Here, the gain $k_p>0$ plays the role to make a tradeoff on the trust of historical and current information. The last equation in the observer \eqref{pebo:1} may be modified accordingly to accelerate convergence speed. 
\end{remark}

\begin{remark}\rm
\label{rem:zero}
We underline that the third equation in the observer \eqref{pebo:1}, in which we have used both the PE condition of $(1-\omega)$ as well as its positiveness after $t_\star + \epsilon$, is not a gradient flow. The gradient flow is given by
$
	{d\over dt}{\hat\theta }  = \gamma (1-\omega)[ (\zeta - \omega\zeta_0) - (1-\omega) \hat \theta ],
$
which also provides a globally exponentially convergent estimate to $\theta$ under the IE assumption. We refer the reader to \cite{SASBOD} for on-line gradient descent algorithms. Besides, in Proposition \ref{prop:3} it is necessary to carefully select the initial conditions of filters $G_1[\cdot]$ and $G_2[\cdot]$ to deal with the term $\et$ for the IE case. It can be implemented as the state space models
$
\dot {\bf y}  = - \alpha {\bf y} + {\bf u}
$
with ${\bf y}(0)= 0$ for $G_2[\cdot]$, and $\dot {\bf x} = - \alpha {\bf x}+ \alpha^2{\bf u}, ~ {\bf y} = -{\bf x} + \alpha {\bf u}$ with ${\bf x}(0) = \alpha {\bf u}(0)$ for $G_1[\cdot]$, in which ${\bf x},~{\bf u}$ and ${\bf y}$ denote the internal state, input and output of the filters, respectively.
\end{remark}


\section{Position-Velocity Observer with Biased Acceleration Measurement}
\label{sec4}

In this section, we present the main result of the paper, {\em i.e.}, a solution to {\bf P2}, for which we design a position-velocity observer with availability of only the acceleration. 

\subsection{Generation of Linear Regression Models}
\label{sec:4a}

%



We are interested in the estimation of position $z$, linear velocity $v$ and the bias $b_a$, which are collected in the vector
$$
X:= \col(z, v, b_a) \in \rea^9.
$$

\begin{proposition}
\rm \label{prop:lre1}
Consider the dynamics \eqref{dot_z}-\eqref{dot_v}, and design the following dynamic extension
\begin{equation}
\label{dyn_ext:2}
\begin{aligned}
\dot Q & = Q \Omega_\times
\\
\dot\xi & =   A (y,\Omega, Q)  \xi + B(a), \quad
\xi(0)= 0_{10}
\\
\dot \Psi & =  A (y,\Omega, Q) \Psi, \quad \Psi(0) = I_{10}
\end{aligned}
\end{equation}
with $Q(0) \in SO(3)$, and 
\begequ
\label{AB:prop4}
A:= \begmat{
0 & - y^\top & 0_{3}^\top & 0_3^\top
\\
0_{3} & -\Omega_\times & I_3 & Q^\top
\\
0_6 & \ldots & \ldots & 0_{6\times 3}
}
\quad
B:= \begmat{0\\ a \\ 0_{6}}.
\endequ
Then, the state $X$ satisfies the algebraic equation
\begin{equation}
\label{algebraic:4.1}
    X = T(y)[\xi + \Psi\theta]
\end{equation}
with 
$
T(y):= \diag\{y, [I_6, 0_{6\times 3}] \},
$
in which the unknown constant vector $\theta$ admits the LRE
$
y_{\tt N} =  \psi^\top \theta,
$
with 
\begequ
\label{psi_y:prop4}
\begin{aligned}
y_{\tt N} & ~:= ~ - \phi T_1\xi - G_2\big[ (\phi y^\top + \alpha \Pi_y) T_2 \xi \big]
\\
\psi &  ~:=~ \left(\phi T_1\Psi\right)^\top + G_2\big[ (\phi y^\top + \alpha \Pi_y) T_2\Psi \big]^\top ,
\end{aligned}
\endequ
$\phi$ defined in \eqref{phi:r}, the filters $G_1 [\cdot],~G_2[\cdot]$ starting from zero, the parameter $\alpha>0$, and the matrices 
$
T_1:= \begmat{1 &~ 0_{1\times 9}},~
T_2:= \begmat{0_{3\times1} ~& I_3 ~& 0_{3\times 6}}.
$
\end{proposition}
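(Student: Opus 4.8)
The plan is to prove the two assertions of the proposition in turn: first the algebraic identity \eqref{algebraic:4.1}, then the linear regression $y_{\tt N}=\psi^\top\theta$. The guiding idea is that the dynamic extension \eqref{dyn_ext:2} is a linear time-varying copy of the true dynamics in which the only obstruction --- the unknown, attitude-dependent gravity term $R^\top g$ in \eqref{dot_v} --- has been recast as a \emph{constant} parameter. Writing $\chi:=\xi+\Psi\theta\in\rea^{10}$, the extension \eqref{dyn_ext:2} gives at once $\dot\chi=A\chi+B$ with $\chi(0)=\xi(0)+\Psi(0)\theta=\theta$, so the task reduces to showing that the physical state can be arranged to solve this same initial value problem.

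First I would exhibit the ``true'' trajectory. Partition $\chi=\col(\chi_1,\chi_2,\chi_3,\chi_4)$ with $\chi_1\in\rea$ and $\chi_2,\chi_3,\chi_4\in\rea^3$, and propose the candidate $\chi^\star:=\col(|z|,v,b_a,c)$ for a constant $c\in\rea^3$ to be identified. Reading off the block rows of $A$ in \eqref{AB:prop4}: the first row reproduces ${d\over dt}|z|=-y^\top v$ as in Proposition \ref{prop:1}; the last six rows force $\chi_3,\chi_4$ constant, consistent with $\dot b_a=0$; and the second block row requires $\dot v=-\Omega_\times v+\chi_3+Q^\top\chi_4+a$, which matches \eqref{dot_v} precisely provided $Q^\top c=R^\top g$. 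The crux is then to note that $\dot R=R\Omega_\times$ and $\dot Q=Q\Omega_\times$ share the same right factor: since $(\dot Q)^\top=-\Omega_\times Q^\top$, one has ${d\over dt}(RQ^\top)=R\Omega_\times Q^\top-R\Omega_\times Q^\top=0$, so $RQ^\top\equiv R(0)Q(0)^\top=:M_0$ is constant and $R^\top g=Q^\top M_0^\top g$. Choosing $c:=M_0^\top g$ makes $\chi^\star$ solve $\dot\chi=A\chi+B$. Setting $\theta:=\chi^\star(0)=\col(|z(0)|,v(0),b_a,M_0^\top g)$ and invoking uniqueness of solutions gives $\chi\equiv\chi^\star$; applying $T(y)$ --- with $y\,|z|=z$ and $[I_6,0_{6\times3}]\chi^\star=\col(v,b_a)$ --- then yields \eqref{algebraic:4.1}.

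For the regression I would start from the filtered identity \eqref{filter_r} of Proposition \ref{prop:1}, which holds for the true $r=|z|$ and $v$. Using $y^\top v\,\phi=(\phi y^\top)v$ together with the linearity of $G_2[\cdot]$, I would recast it as $r\phi+G_2[(\phi y^\top+\alpha\Pi_y)v]=\et$. Since $r=T_1\chi$ and $v=T_2\chi$ by the definitions of $T_1,T_2$, substituting $\chi=\xi+\Psi\theta$ and passing the constant $\theta$ through the linear filter $G_2[\cdot]$ splits the left-hand side into a $\theta$-free part, equal to $-y_{\tt N}$, and a part linear in $\theta$ with coefficient exactly $\psi^\top$ as defined in \eqref{psi_y:prop4}. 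Because $G_1[\cdot],G_2[\cdot]$ are initialized at zero the term $\et$ drops out, leaving $y_{\tt N}=\psi^\top\theta$. All signals in $y_{\tt N}$ and $\psi$ depend only on $\xi,\Psi,\phi,y$, hence on the measured $y,\Omega,a$, so the regression is implementable.

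The main obstacle is the conceptual step in the second paragraph: recognizing that $\dot R=R\Omega_\times$ and $\dot Q=Q\Omega_\times$ make $RQ^\top$ constant, which is what converts the otherwise intractable $R^\top g$ into the constant $M_0^\top g$ and lets a finite-dimensional extension reproduce the dynamics despite $R$ being unknown. The remainder is bookkeeping: matching each block row of $A$ against \eqref{dot_z}--\eqref{dot_v}, carrying the ten-dimensional partition correctly through $T(y),T_1,T_2$, and confirming that zero initialization of $G_1,G_2$ annihilates $\et$ so that the regression is an exact identity rather than only an asymptotic one.
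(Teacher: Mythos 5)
Your proposal is correct and follows essentially the same route as the paper: the constancy of $RQ^\top$ (the paper's $Q_c$, your $M_0$) to recast $R^\top g$ as $Q^\top g_c$, the extended state $\chi=\col(r,v,b_a,g_c)$ satisfying $\dot\chi=A\chi+B$ so that $\chi=\xi+\Psi\theta$, and substitution into the zero-initial-condition filtered identity of Proposition \ref{prop:1} to split off $y_{\tt N}$ and $\psi^\top\theta$. The only cosmetic difference is that you verify $\chi\equiv\xi+\Psi\theta$ by uniqueness of solutions of a common initial value problem, whereas the paper writes the error dynamics $\frac{d}{dt}(\chi-\xi)=A(\chi-\xi)$ and invokes the state transition matrix $\Psi$ --- the same argument in different clothing.
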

\begin{proof}\rm
From the analysis in Section \ref{sec3}, the full dynamics is given by \eqref{dot_xR}-\eqref{dot_v} and $\dot r  = - y^\top v$.
%
Define the error between $Q$ and $R$ in $SO(3)$ as
$
E(R,Q):= RQ^\top,
$
and we have
$$
\begin{aligned}
\dot{\ob{E(R,Q)}} & := \dot R Q^\top - R Q^{-1} \dot Q Q^{-1}
\\
& = R \Omega_\times Q^\top - R Q^\top Q \Omega_\times Q^\top
= 0.
\end{aligned}
$$
Hence, there exists a constant matrix $Q_c \in SO(3)$ satisfying
$$
R(t) = Q_c Q(t),~ \forall t\ge 0,
$$
with $Q_c := R(0)Q(0)^\top$. Then, we have the parameterization to the last term in the dynamics of $v$ as
$$
R(t)^\top g = Q(t)^\top g_c, ~ \forall t\ge 0
$$
with a new \emph{constant unknown} vector $g_c \in \rea^3$ defined as $g_c: =Q_c^\top g$.

Now considering that $b_a$ is a constant bias, and defining the extended state
$
\chi:= \col(r, v, b_a, g_c) \in \rea^{10},
$
the plant dynamics may be compactly rewritten as
\begin{equation}
\label{dot:chi1}
\dot \chi = A(y,\Omega,Q)\chi + B(a).
\end{equation}
Comparing the dynamics \eqref{dot:chi1} and \eqref{dyn_ext:2} and following the idea of generalized (G)PEBO \cite{ORTgpebo}, it yields
$
 \frac{d}{dt}(\chi - \xi) =A(y,\Omega, Q)\big[\chi - \xi \big].
$ 
The matrix $\Psi(t)\Psi(s)^\top$ can be viewed as the state transition matrix for this LTV system from $s$ to $t$. Hence, this yields
\begequ
\label{lre:phi}
\begin{aligned}
\chi - \xi = \Psi[ \chi_0 - \xi(0)] \quad 
\implies & \quad
\chi = \xi + \Psi \chi_0
\\
\overset{\theta:=\chi_0}{\implies} & \quad
\chi = \xi + \Psi \theta,
\end{aligned}
\endequ
where in the second equation we have used  $\xi(0)=0$.
From the proof of Proposition \ref{prop:1}, we have
\begequ
\label{equality1}
r\phi + G_2 [(\phi y^\top  + \alpha \Pi_y) v] =0
\endequ
with $\phi$ defined in \eqref{phi:r}, for the case with zero filtering initial conditions. From the definitions of $\chi$ and $X$ we have
$
X = T(y) \chi,
$
thus verifying the algebraic equation \eqref{algebraic:4.1}. Substituting the last equation of \eqref{lre:phi} into \eqref{equality1}, we then have
$$
\begin{aligned}
 &\quad	\phi T_1 \chi +G_2[(\phi y^\top + \alpha \Pi_y) T_2 \chi] =0
 \\
 \implies & \quad
 \phi T_1 (\xi+\Psi\theta) + G_2[(\phi y^\top + \alpha \Pi_y) T_2 (\xi+\Psi\theta)] =0
 \\
 \implies & \quad
  y_{\tt N} =\psi^\top \theta
\end{aligned}
$$
with the signals $y_{\tt N}$ and $\psi$ given in \eqref{psi_y:prop4}.
\qed
\end{proof}

The above regression model is motivated by an extension of PEBO from Euclidean space to matrix Lie groups in our previous work \cite{YIetalCDC}. By gathering the unknown but constant matrices $Q_c \in SO(3)$ and $g \in \rea^3$ in a new vector $g_c$, it provides an effective way to deal with the scenario with unknown attitude and gravitational constant, {\em e.g.}, in aerospace applications.


\subsection{Position-Velocity Observer Design}
\label{sec:4b}

In the above subsection we present a novel linear regression model with respect to the constant vector $\theta$, the estimation of which is sufficient to reconstruct the state $X$. Now, the remaining task is to estimate $\theta$ stemmed from Proposition \ref{prop:lre1}. In the following we provide a globally convergent position-velocity observer with biased inertial measurements for the robot trajectory \emph{not} satisfying the PE condition.

\begin{proposition}
\label{prop:5}\rm
Consider the dynamics \eqref{dot_z}-\eqref{dot_v} under Assumption \ref{ass:1}, and the filtered signal
\begequ
\label{filter_H0}
\hat{\theta} = \calh[(y_{\tt N}, \psi)]
\endequ
with $y_{\tt N},\psi$ defined in Proposition \ref{prop:lre1} and filter $\calh$ given by
\begin{equation}
\label{filter_H}
\left\{
\begin{aligned}
 \dot \Phi & = - \rho \Phi + \psi \psi^\top
\\
 \dot Y & = - \rho Y + \psi y_{\tt N}
 \\
 \dot \zeta & = \Delta \bby - \Delta^2 \zeta
\\
 \dot \omega & = - \Delta^2 \omega , \quad \omega(0)=1
 \\
 \dot{\hat \theta} & = \gamma \big[
 (\zeta + k_p\Delta \bby) - (1-\omega + k_p \Delta^2) \hat \theta
 \big]
\end{aligned}
\right.
\end{equation}
with the gains $\rho >0$, $\gamma>0$ and $k_p>0$, $\Delta:= \det\{\Phi\}$, $\bby:= \adj\{\Phi\}Y$, $n= \dim\{\theta\}$, and the initial conditions $\zeta(0)= 0_n, Y(0)=0_n$ and $\Phi(0)= 0_{n\times n}$. Then, the observer consisting of \eqref{dyn_ext:2}, \eqref{filter_H} and the estimate
$
\hat X= T(y)(\xi + \Psi \hat \theta)
$
with $T(y)$ defined in Proposition \ref{prop:lre1}, guarantees global exponential convergence
$
\lim_{t\to\infty} |\hat X(t) - X(t)| =0
$
from any initial guess $\hat\theta(0) \in \rea^n$, assuming that $\psi$ is IE.
\end{proposition}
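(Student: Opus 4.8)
The plan is to follow the template of Proposition~\ref{prop:3}, now adapted to the \emph{vector} regressor $\psi$: first convert the matrix LRE $y_{\tt N}=\psi^\top\theta$ into a family of decoupled \emph{scalar} LREs sharing a common scalar regressor, via a DREM-type mixing with the adjugate, and then apply the PEBO construction to manufacture a persistently exciting regressor out of the merely interval-exciting $\psi$.

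Concretely, I would first exploit the two linear filters in \eqref{filter_H}. Since $y_{\tt N}=\psi^\top\theta$ and $\theta$ is constant, the signal $Y-\Phi\theta$ obeys $\frac{d}{dt}(Y-\Phi\theta)=-\rho(Y-\Phi\theta)$; with the zero initial conditions $Y(0)=0_n,\ \Phi(0)=0_{n\times n}$ this forces the exact identity $Y(t)=\Phi(t)\theta$ for all $t\ge0$. Multiplying on the left by $\adj\{\Phi\}$ and using $\adj\{\Phi\}\Phi=\det\{\Phi\}I_n=\Delta I_n$ yields the mixed, scalar-regressor relation $\bby=\Delta\theta$, i.e.\ $n$ scalar equations with the common regressor $\Delta$. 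Next, substituting $\Delta\bby=\Delta^2\theta$ into $\dot\zeta=\Delta\bby-\Delta^2\zeta$ gives $\frac{d}{dt}(\zeta-\theta)=-\Delta^2(\zeta-\theta)$, so with $\zeta(0)=0_n$ and $\omega(t)=\exp(-\int_0^t\Delta^2(s)\,ds)$ one obtains the new LRE $\zeta=(1-\omega)\theta$. Inserting $\zeta=(1-\omega)\theta$ and $\Delta\bby=\Delta^2\theta$ into the last line of \eqref{filter_H} collapses the estimator to the scalar-coefficient error dynamics
\begin{equation*}
\dot{\tilde\theta}=-\gamma\,(1-\omega+k_p\Delta^2)\,\tilde\theta,\quad \tilde\theta:=\hat\theta-\theta,
\end{equation*}
which is decoupled across the components of $\tilde\theta$, the $k_p\Delta^2$ term being the nonnegative ``mixing'' enhancement discussed after Proposition~\ref{prop:3}.

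The crux, and the step I expect to be the main obstacle, is to show that the IE of $\psi$ propagates to PE of the manufactured regressor $1-\omega$. I would argue that $\Phi(t)=\int_0^t e^{-\rho(t-s)}\psi(s)\psi^\top(s)\,ds$, so if $\int_{t_0}^{t_0+t_c}\psi\psi^\top\,ds\succeq\delta I_n$ then for every $t\ge t_0+t_c$ one has $\Phi(t)\succeq \delta e^{-\rho(t-t_0)}I_n\succ0$, whence $\Delta(t)=\det\{\Phi(t)\}>0$ on $[t_0+t_c,\infty)$. Consequently $\int_0^t\Delta^2$ is strictly increasing past $t_0+t_c$, so the monotone signal $1-\omega$ is bounded below by some $\delta_0>0$ for all large $t$; since $k_p\Delta^2\ge0$, the coefficient satisfies $1-\omega+k_p\Delta^2\ge\delta_0$ eventually and $\int_0^t(1-\omega+k_p\Delta^2)\to\infty$ at a linear rate, which gives global exponential decay of $\tilde\theta$ from any $\hat\theta(0)$.

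It remains to transfer the parameter convergence to the state estimate. From \eqref{algebraic:4.1} and $\hat X=T(y)(\xi+\Psi\hat\theta)$ we get $\hat X-X=T(y)\Psi\tilde\theta$, which I would bound by $\|T(y)\Psi\|\,|\tilde\theta|$. The factor $T(y)$ is bounded because $y\in\mathbb{S}^2$, while the transition matrix $\Psi$ of $\dot\Psi=A(y,\Omega,Q)\Psi$ grows at most polynomially (the $(b_a,g_c)$ block is static; the $v$-block is driven through the skew-symmetric $-\Omega_\times$, so $|v|$ grows at most linearly, and $r$ at most quadratically), which is dominated by the exponential decay of $\tilde\theta$. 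Hence $|\hat X(t)-X(t)|\to0$ exponentially, completing the proof. The delicate points to get right are the adjugate identity that decouples the vector problem and, above all, the IE$\Rightarrow$PE passage through the determinant $\Delta$.
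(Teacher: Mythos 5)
Your proposal is correct and follows essentially the same route as the paper's proof: Kreisselmeier filtering to get $Y=\Phi\theta$, adjugate mixing to obtain the scalar-regressor relation $\bby=\Delta\theta$, the PEBO construction $\zeta=(1-\omega)\theta$, and the resulting decoupled error dynamics $\dot{\tilde\theta}=-\gamma(1-\omega+k_p\Delta^2)\tilde\theta$ with the IE-to-eventually-bounded-below argument for $1-\omega$. Your final step is in fact slightly more careful than the paper's, which simply invokes state boundedness, whereas you correctly note that $\Psi$ itself need not be bounded and justify convergence of $\hat X-X=T(y)\Psi\tilde\theta$ via the at-most-polynomial growth of $\Psi$ being dominated by the exponential decay of $\tilde\theta$.
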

\begin{proof}\rm
According to Proposition \ref{prop:lre1}, the system dynamics \eqref{dot_z} and \eqref{dot_v}, together with \eqref{dyn_ext:2} admits the LRE $y_{\tt N} = \psi^\top \theta$. Following the dynamic regressor extension and mixing (DREM) technique \cite{ARAetal}, after going through an LTV filter --- the first two equations in \eqref{filter_H} --- we obtain ${d\over dt}(Y-\Phi\theta) = -\rho (Y-\Phi\theta)$ with $\rho>0$ from $Y(0)-\Phi(0)\theta  =0$, thus obtaining the Kreisselmeier’s  extended LRE \cite{KRE}
\begin{equation}
\label{elre}
Y = \Phi \theta
\end{equation}
with $Y\in \rea^n$ and $\Phi \in \rea^{n\times n}$. After pre-multiplying the adjugate matrix $\adj\{\Phi\}$ to the both sides of \eqref{elre}, we get the decoupled regressors
$
\mathbb{Y} = \Delta \theta
$
with a \emph{scalar} regression $\Delta$, which is now in the same form as in Proposition \ref{prop:3}. Without loss of generality, we assume $\psi$ is IE in the interval $[0,t_c]$ with $t_c>0$, {\em i.e.},
$
\int_0^{t_c}\psi(s)\psi(s)^\top ds \succeq \delta I
$
for some $\delta >0$. From 
$
\dot \Phi = -\rho\Phi + \psi \psi^\top
$
with $\Phi(0)=0_{n\times n}$, we have
$$
\begin{aligned}
 \Phi(t_c)
 ~=~  &\int_{0}^{t_c} e^{-\rho(t_c-s)} \psi(s)\psi(s)^\top ds
\\
 ~\succeq~ & e^{-\rho t_c} \int_{0}^{t_c}  \psi(s)\psi(s)^\top ds
\\
 ~\succeq~ & \delta e^{-\rho t_c}  I.
\end{aligned}
$$
From the definition $\Delta$ being the determinant of $\Phi$, it yields 
$$
\begin{aligned}
\psi \in \mbox{ IE} \quad & \implies \quad \Delta \in \mbox{IE}
\\
 & \implies \quad (1-\omega + k_p\Delta^2) \in \mbox{PE},
\end{aligned}
$$
where the second implication can be proved using the same arguments as Proposition \ref{prop:3} thus omitted. On the other hand, we can verify that the vector $\theta$ satisfies the LRE
\begequ
\label{lre:4}
\zeta + k_p\Delta \mathbb{Y} = (1-\omega + k_p\Delta^2) \theta.
\endequ
Now, we obtain a new LRE \eqref{lre:4}, satisfying the PE condition, from the IE regressor \eqref{algebraic:4.1}. By defining the estimation error $\tilde \theta = \hat \theta - \theta$, we have
\begequ
\label{error_dyn:4}
\dot{\tilde \theta} = - \gamma (1-\omega + k_p\Delta^2) \tilde\theta.
\endequ
Since $(1-\omega + k_p\Delta^2)$ is PE and non-negative for $k_p>0$, the LTV system \eqref{error_dyn:4} is globally exponentially stable at the origin. By invoking the identity \eqref{algebraic:4.1} and state boundedness from Assumption \ref{ass:1}, it completes the proof.
\qed
\end{proof}


\section{Application to Visual-Inertial Navigation}
\label{sec5}

We now apply the proposed observer to the vision-aided inertial navigation systems as an illustration. In this problem, there are several feature points in the camera view field, whose inertial positions ${}^I z_i$ are known in advance. We assume that some data association algorithm has been used to determine if these observations correspond to landmarks in the given
map. Note that a known map makes it significantly simpler than the SLAM problem, which is of more practical interests; see for example \cite{LOUetal}. The camera may provide the bearing measurement of features in the body-fixed frame
\begequ
\label{yi}
y_i = {z_i \over |z_i|} = R^\top{ {}^I z_i - x \over |{}^I z_i - x|},
\quad
i \in \caln := \{1,\ldots, n\}.
\endequ
We adopt the same notations in Section \ref{sec2}, except the subindex for feature points to distinguish them. 

\begin{prob}\rm
 ({\em Navigation observer})
Consider \eqref{dot_xR}-\eqref{dot_v} with measurements $(a,\Omega)$ and $y_i$. Both the bias $b_a \in \rea^3$ and the gravity vector $g \in \rea^3$ are unknown.
Assume the positions ${}^I z_i$ in $\{\cali\}$ are constant and known. Design an observer to asymptotically estimate the pose $(R,x)$.
\end{prob}

A simple solution is to use the proposed range observer to reconstruct the position of feature points in $\{\calb\}$, and then solve the localization problem with ``full position measurement''. We need the following.

\begin{assumption}
\label{ass:three}
There exist $i,j\in \caln \backslash\{ n\}$ such that
$
{}^I \eta_i \times {}^I \eta_j \neq 0, ~ i\neq j
$
with the vectors ${}^I \eta_i:= {}^I z_{i+1} - {}^I z_{i}$.
\end{assumption}

\begin{proposition}
\rm\label{prop:navigation_observer}
Consider the navigation observer consisting of the ranges observer
\begin{align}\nonumber
\dot Q & ~=~ Q\Omega_\times
\\
\dot \xi & ~=~ A_e(\Omega,Q, \bar y )\xi + B_e(a), \quad \xi(0)=0
\\ \label{obs:navigation1}
\dot \Psi &~ =~ A_e(\Omega,Q, \bar y)\Psi , \quad \Psi(0)=I
\\ \nonumber
\hat\theta & ~= ~\calh(y_{\tt N},\psi), ~
\hat v  = T_v(\xi + \Psi \hat\theta),
\\
\hat z_i & ~=~ T_{z,i}(\xi + \Psi \hat\theta)  y_i
\end{align}
with $\bar y:= \col(y_1^\top, \ldots, y_n^\top)$ and
\begin{equation}
\label{psi_y:navigation}
\begin{aligned}
	y_{\tt N}& ~:=~ - \Lambda_\phi T_r\xi - G_2 [(\Lambda_\phi\bar y+ \alpha \mathbf{\Pi}) T_v\xi ]
	\\
	\psi &~:=~(\Lambda_\phi T_r\Psi)^\top + G_2[ (\Lambda_\phi\bar y + \alpha \mathbf{\Pi})T_v \Psi ]^\top,
\end{aligned}
\end{equation}
the filter $\calh$ defined in \eqref{filter_H0}-\eqref{filter_H}, $\mathbf{\Pi}:= \col(\Pi_{y_1}, \ldots, \Pi_{y_n})$, the matrices $T_r:= [0_{n\times 9}~ I_n]$, $T_v := [I_3 \quad 0_{3\times (6+n)}]$, $T_{z,i}:= [0_{1\times 9} \quad  \mathsf{e}_i^\top]$, $\mathsf{e}_i$ being the $i$-th canonical basis in $\rea^n$, $\Lambda_\phi := \diag\{\phi_1,\ldots, \phi_n\}$, $\phi_i$ in \eqref{phi:r} for the $i$-th feature point, and
%
\begin{equation}
\label{AeBe}
A_e:=
\begmat{-\Omega_\times & I_3 & Q^\top &  0_{3\times n}
\\
\mbox{---}&&0_{6\times (n+9)} & \mbox{---}
\\
- y_1^\top & 0 & &
\\
\vdots & &\ddots&
\\
- y_n^\top & & & 0
},
~
B_e := \begmat{a \\ 0 \\ \vdots \\ 0 },
\end{equation}
cascading to the (full-position) localization observer
\begin{align} \nonumber
\dot{\hat Q}_c & ~ =~ - w_\times \hat Q_c , \quad  \hat R  = \hat Q_c Q,
\\ \label{obs:navigation2} 
w & ~= ~ {1\over 2}{\textstyle\sum_{i \in \caln \backslash\{n\}}} k_i {}^I \eta_i \times (\hat Q_c Q \hat \eta_i)
\\
\dot{\hat x} & ~=~ \hat Q_c Q \hat v +  {\textstyle \sum_{i \in \caln}\sigma_i} ({}^I z_i - \hat x - \hat Q_c Q \hat z_i )\nonumber
\end{align}
with $\hat \eta_i = \hat z_{i+1} - \hat z_i$ and the gains $k_i,\sigma_i, \alpha>0$. If $\psi$ is IE and Assumption \ref{ass:three} holds, then the convergence
$
\lim_{t\to \infty} [\|\hat R(t) - R(t)\| + |\hat x(t) - x(t)|] =0
$
holds almost globally.
\end{proposition}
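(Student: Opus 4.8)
The plan is to exploit the one-way cascade structure of the observer: the ranges block \eqref{obs:navigation1} produces $(\hat v,\hat z_i)$ and feeds them into the localization block \eqref{obs:navigation2}, with no feedback in the reverse direction, so the three stages can be analyzed in sequence. First I would recognize the ranges block as a direct multi-feature instance of Proposition \ref{prop:5}: stacking the $n$ bearings into $\bar y$ and assembling $A_e,B_e$ as in \eqref{AeBe} reproduces exactly the LRE $y_{\tt N}=\psi^\top\theta$ of Proposition \ref{prop:lre1}, now with $\theta$ collecting the $n$ ranges together with $(v,b_a,g_c)$. Hence, under the IE assumption on $\psi$, the filter $\calh$ yields $\hat\theta\to\theta$ exponentially, so that $\hat v\to v$ and $\hat z_i\to z_i$ exponentially, and in particular $\hat\eta_i:=\hat z_{i+1}-\hat z_i\to\eta_i$ where, from $z_i=R^\top({}^I z_i-x)$, the true body-frame difference is $\eta_i=R^\top {}^I\eta_i$. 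These exponentially decaying errors then enter \eqref{obs:navigation2} as vanishing perturbations.

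Next I would analyze the attitude subsystem. Writing $R=Q_c Q$ and $\hat R=\hat Q_c Q$ with the constant $Q_c=R(0)Q(0)^\top$, the attitude error $\tilde R:=\hat R R^\top=\hat Q_c Q_c^\top$ obeys $\dot{\tilde R}=-w_\times\tilde R$. Substituting $\hat R\hat\eta_i=\tilde R\,{}^I\eta_i+\hat R(\hat\eta_i-\eta_i)$ into the innovation shows that, modulo an exponentially decaying term, $w\to\frac12\sum_i k_i\,{}^I\eta_i\times(\tilde R\,{}^I\eta_i)$, the classical vectorial attitude-correction term. Taking the potential $V_R=\frac12\sum_i k_i|{}^I\eta_i-\tilde R\,{}^I\eta_i|^2$, a routine computation along the nominal flow gives $\dot V_R=-2|w|^2\le 0$, and Assumption \ref{ass:three}, which guarantees two non-collinear reference vectors ${}^I\eta_i,{}^I\eta_j$, ensures that the critical set $\{w=0\}$ consists of $\tilde R=I_3$ (the unique minimizer, locally exponentially stable) together with a set of isolated unstable rotations of measure zero. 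This delivers almost-global asymptotic stability of $\tilde R=I_3$ for the nominal attitude observer.

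I would then close the argument with the position subsystem and the cascade assembly. With $\tilde x:=\hat x-x$ and $\dot x=Rv$, the relation ${}^I z_i=x+R z_i$ reduces the error dynamics to $\dot{\tilde x}=-\big(\sum_i\sigma_i\big)\tilde x+d(t)$, where $d(t)$ collects products of the exponentially decaying errors $\hat R-R$, $\hat v-v$, $\hat z_i-z_i$ with the bounded states of Assumption \ref{ass:1}. Since the nominal position error dynamics are linear and exponentially stable, this subsystem is ISS with respect to $d$, so $\tilde x\to 0$ once the attitude and ranges errors vanish. Assembling the ranges block, the attitude block and the position block as a cascade, and invoking the exponential decay of all feeding signals, yields $\|\hat R-R\|+|\hat x-x|\to0$ for almost all initial conditions.

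The main obstacle I anticipate is twofold. The delicate geometric step is the characterization of the critical set of $V_R$ under Assumption \ref{ass:three}: one must show that non-collinearity of two reference vectors collapses all spurious equilibria onto a measure-zero set, so that convergence is almost global rather than merely local. The second subtlety is that almost-global asymptotic stability, unlike its global counterpart, is not automatically robust to vanishing perturbations; I would therefore combine the local exponential stability at $\tilde R=I_3$ with a converging-input/converging-state (total-stability) argument to confirm that the exponentially decaying errors supplied by the ranges block do not enlarge the complement of the region of attraction beyond a set of measure zero. \qed
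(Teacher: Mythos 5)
Your overall architecture coincides with the paper's proof: the same one-way cascade (ranges block analyzed via Proposition \ref{prop:5} under the IE assumption, then attitude, then position), the same attitude error variable up to transposition (the paper uses $\tilde Q_c := Q_c\hat Q_c^\top$, you use $\tilde R = \hat Q_c Q_c^\top$), essentially the same potential (your $V_R$ equals the paper's $\sum_i k_i|{}^I\eta_i|^2 - \tr(\tilde Q_c^\top M)$ with $M = \sum_i k_i\,{}^I\eta_i({}^I\eta_i)^\top$), the same characterization of the critical set under Assumption \ref{ass:three} (identity plus isolated unstable rotations, following Mahony et al.), and the same perturbed stable LTI analysis of $\tilde x$. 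Stages one and three of your proposal are correct and match the paper essentially verbatim.

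The gap is precisely at the step you flag as the ``second subtlety'' and then defer to a converging-input/converging-state (total-stability) argument: that tool cannot deliver the conclusion you need. CICS and total stability only assert that trajectories which, after the perturbation has become small, enter a \emph{fixed} neighbourhood contained in the basin of attraction of $\tilde R = I_3$ will converge to $I_3$; they give no control over \emph{which} initial conditions eventually enter such a neighbourhood under the perturbed, non-autonomous flow. In particular, they cannot rule out that the set of initial conditions whose perturbed trajectories are captured by the spurious equilibria has positive measure --- which is exactly the ``almost'' you must prove. The paper closes this in two steps that your sketch lacks. First, the perturbation is absorbed into the Lyapunov analysis itself: the function is augmented with the tail integral $\int_t^\infty|\varepsilon(s)|\,ds$ of the exponential bound on the perturbation term, so that Barbalat's lemma applies to the \emph{perturbed} dynamics and yields convergence of every trajectory to the critical set $\{\tilde Q_c \in SO(3) \,|\, \skew(\tilde Q_c^\top M) = 0\}$. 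Second, and crucially, the non-autonomous Hartman--Grobman theorem \cite{AULWAN2} is invoked to show that, in a neighbourhood of each unstable equilibrium, the perturbed error dynamics is topologically equivalent to a linear time-varying system, whence the set of trajectories attracted to the unstable equilibria has zero Lebesgue measure. Without this second step (or an equivalent stable-manifold argument valid for the non-autonomous error dynamics), almost-global convergence of $\hat R$ --- and hence of $\hat x$, whose analysis depends on it --- remains unproven.
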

\begin{proof}\rm It is given in Appendix.\qed
\end{proof}

The proposed navigation observer is implemented in a \emph{modular} manner, {\em i.e.}, it constitutes of a ranges observer \eqref{obs:navigation1} and a full-position localization observer \eqref{obs:navigation2}. To be precise, we first use \eqref{obs:navigation1} to reconstruct the full position ``measurement'', which is then utilized in the localization observer \eqref{obs:navigation2}. We figure out that the ``integrated'' ranges observer \eqref{obs:navigation1} can be replaced by several ``individual'' observers in Propositions \ref{prop:lre1}-\ref{prop:5} for each feature point. 


\section{Simulation Results}
\label{sec6}

%
\begin{figure*}[!htp]
   \centering
   \subfigure[Range estimate $\hat r$ (IE)]{
   \includegraphics[width=0.27\textwidth]{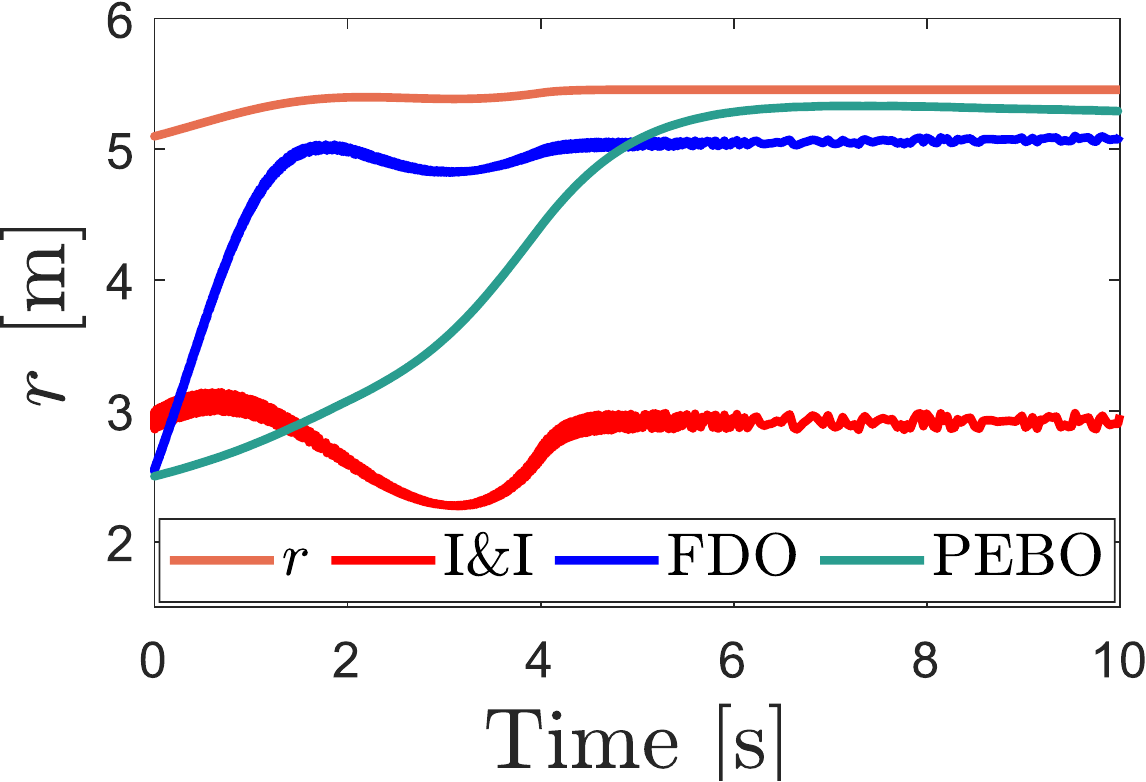}
   \label{fig:with-v1-10}
   }
   \subfigure[Position estimate error $|\hat z - z|$ (PE)]{
   \includegraphics[width=0.27\textwidth]{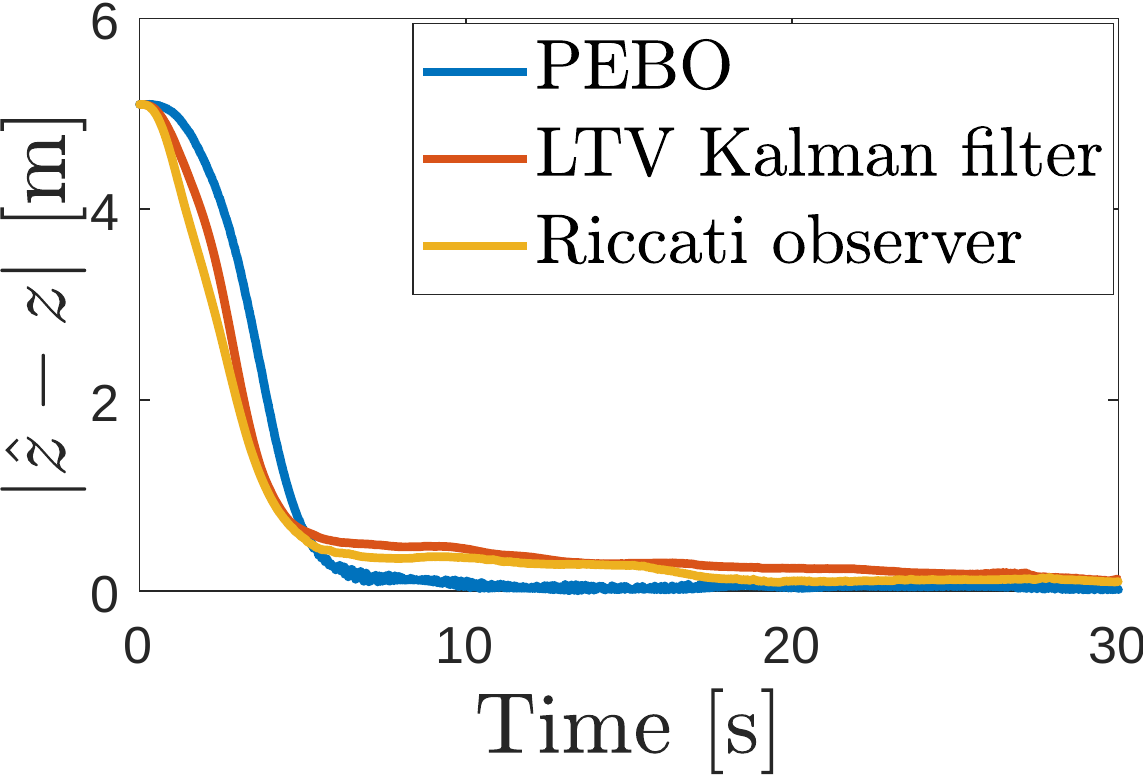}
   \label{fig:riccati1}
   }
   \subfigure[Position estimate error $|\hat z -z| $ (IE)]{
   \includegraphics[width=0.27\textwidth]{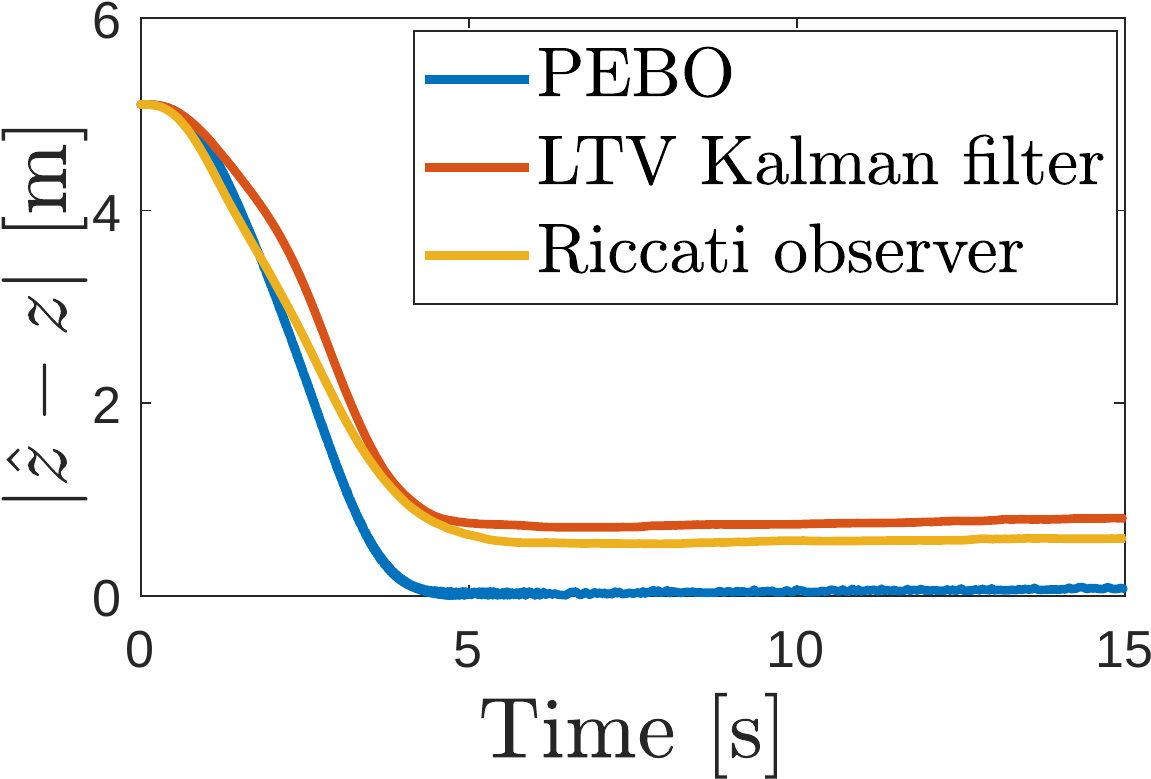}
   \label{fig:riccati2}
   }
   \caption{Comparison among the PEBO, the I\&I observer in \cite{ASTetal}, the feature depth observer (FDO) in \cite{DELetal}, the LTV Kalman filter in \cite{LOUetal} and the Riccati observer in \cite{HAMSAM} with linear velocity information (with noise)}
    \label{fig:comparison}
\end{figure*}
\begin{figure*}[!htp]
   \centering
   \subfigure[Range estimate $\hat r$]{
   \includegraphics[width=0.27\textwidth]{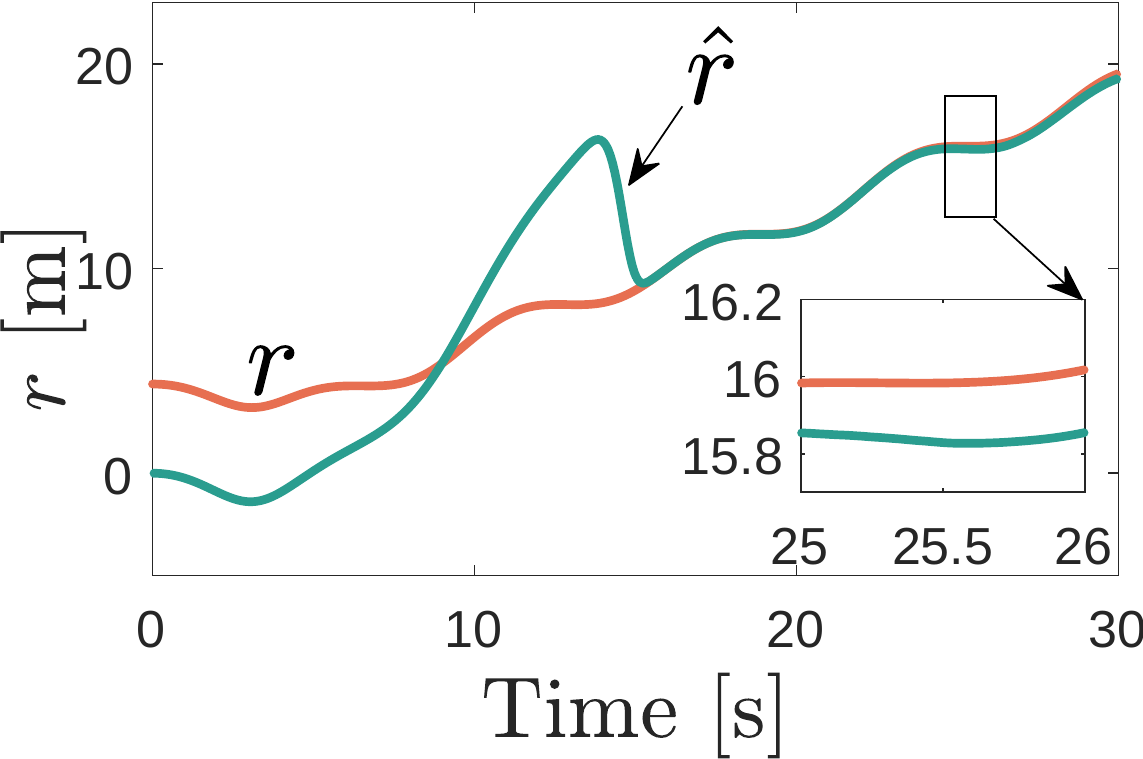}
   \label{fig:a5}
   }
   \subfigure[{Velocity estimation error}]{
   \includegraphics[width=0.27\textwidth]{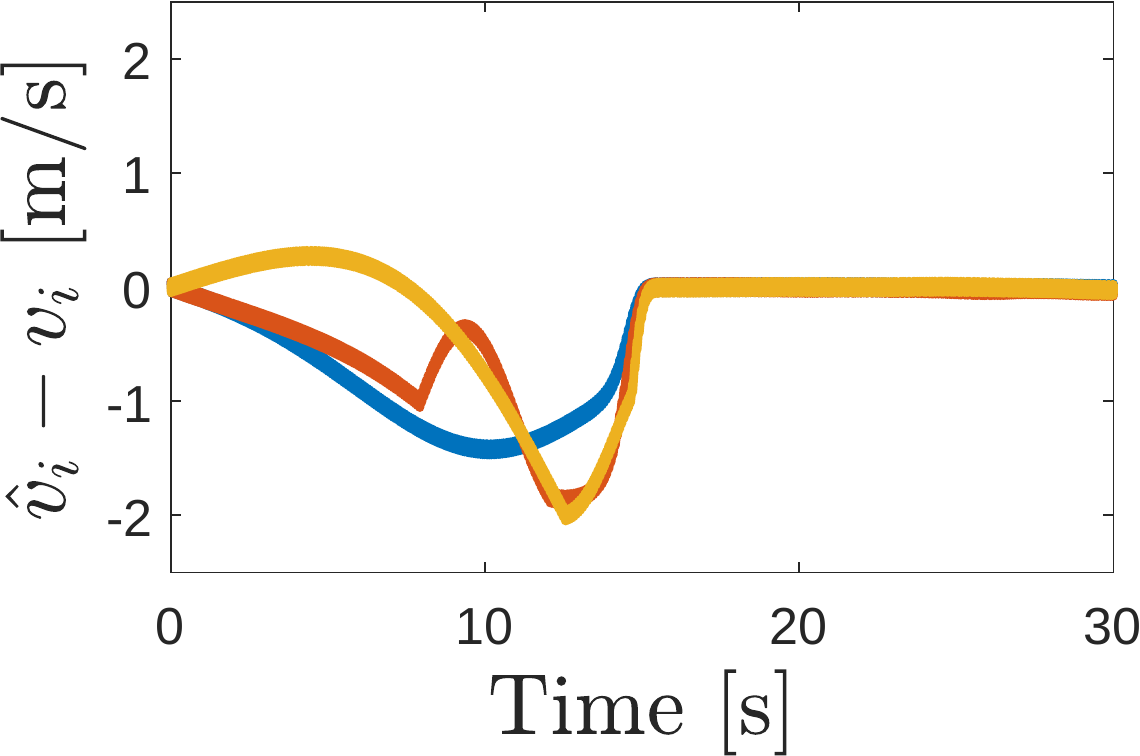}
   \label{fig:a7}
   }
   \subfigure[Bias estimation error]{
   \includegraphics[width=0.27\textwidth]{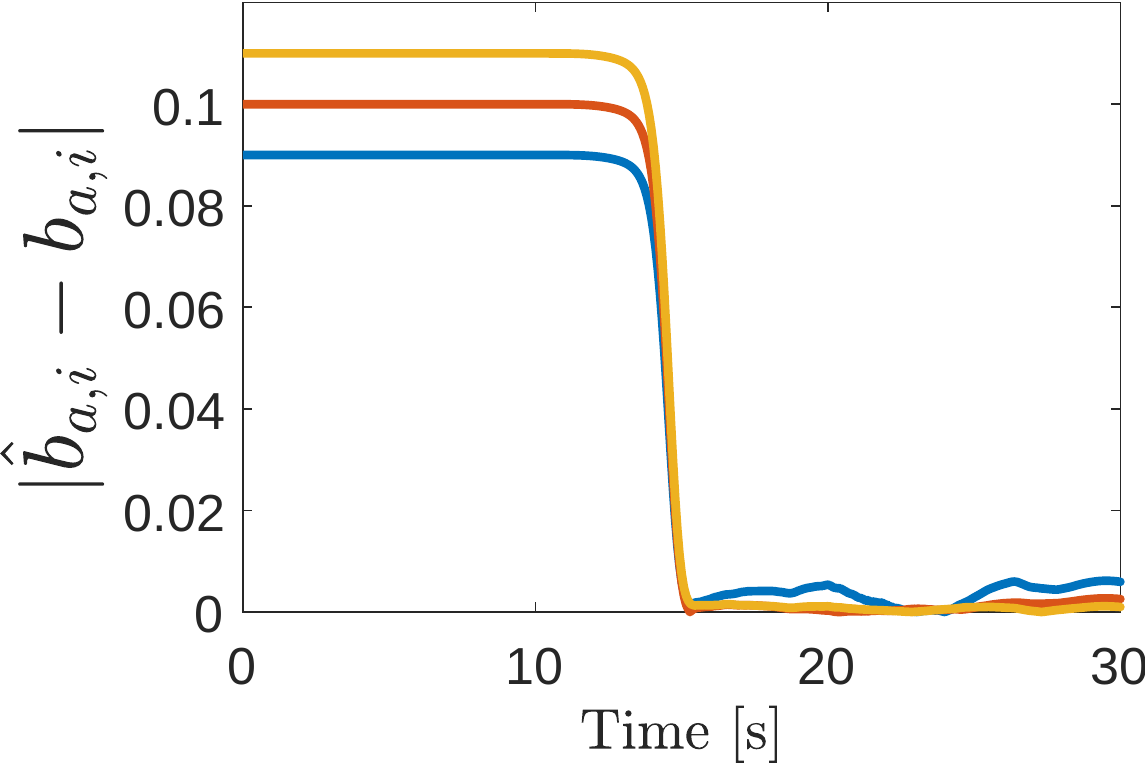}
   \label{fig:a8}
   }
\caption{Performance of the proposed position-velocity observer with biased acceleration measurement (IE, with noise)}
    \label{fig:6}
\end{figure*}
\begin{figure*}[!htp]\small
   \centering
   \subfigure[Position estimation error]{
   \includegraphics[width=0.27\textwidth]{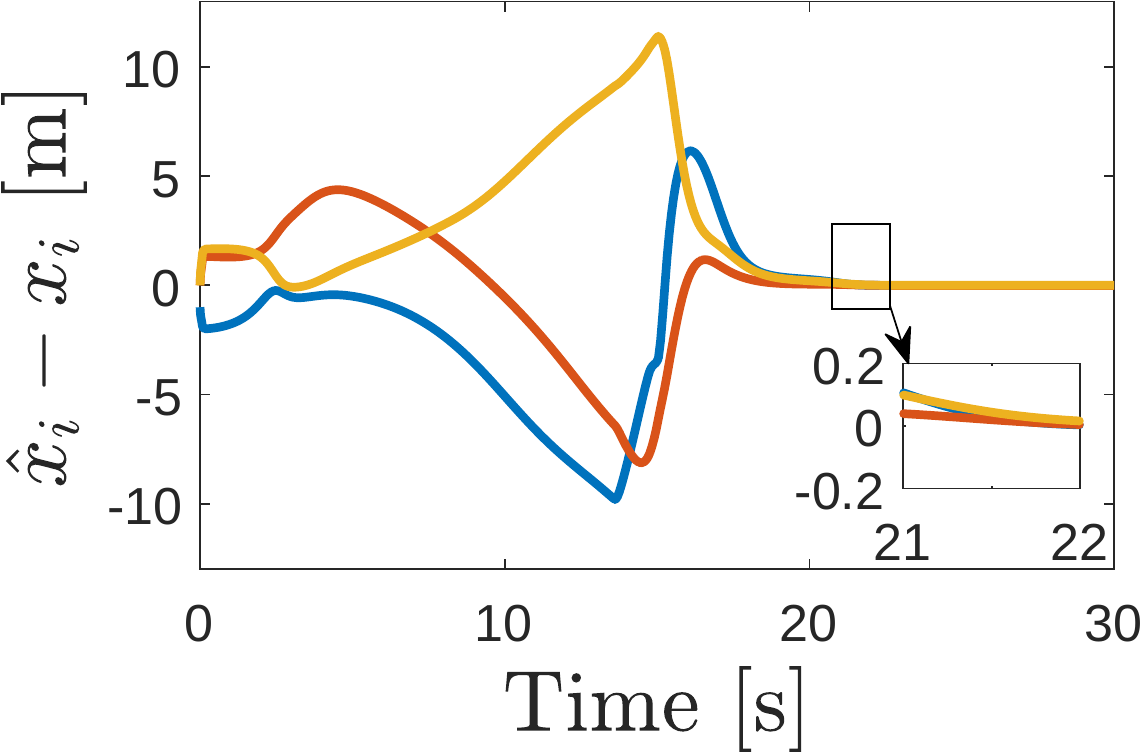}
   \label{fig:n3}
   }
   %
   %
   \subfigure[Attitude estimation error]{
   \includegraphics[width=0.27\textwidth]{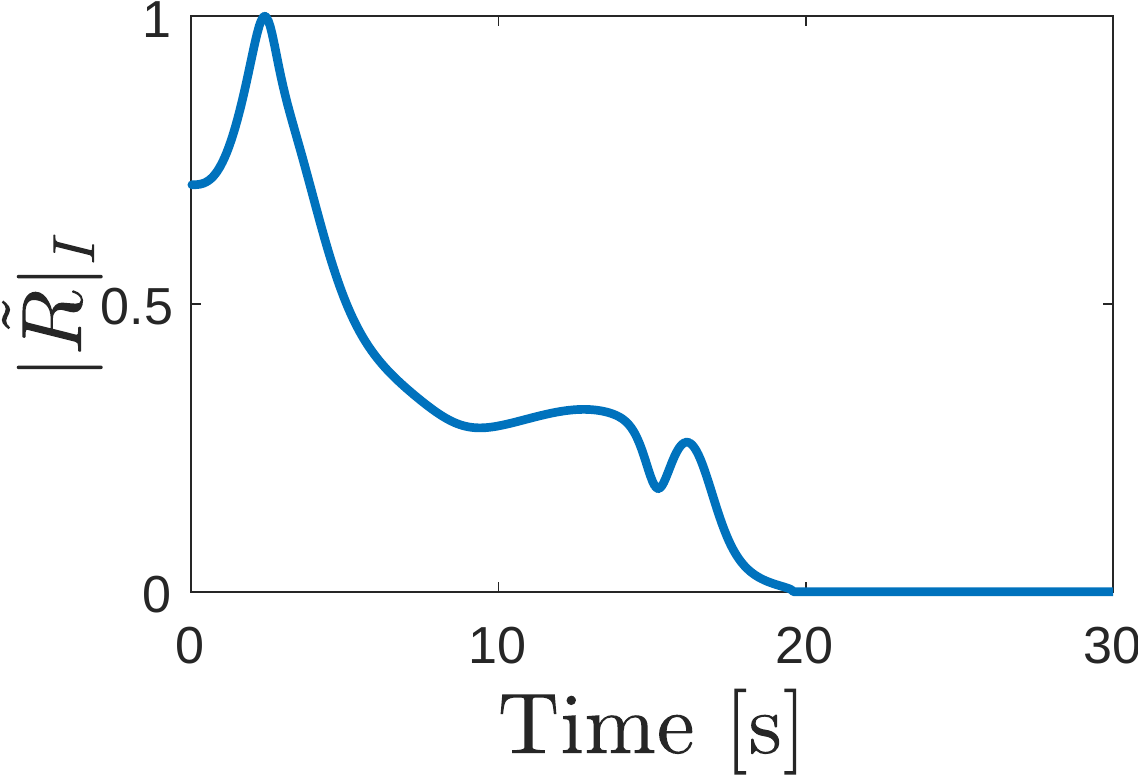}
   \label{fig:n1}
   }
  \subfigure[Velocity estimation error]{
   \includegraphics[width=0.27\textwidth]{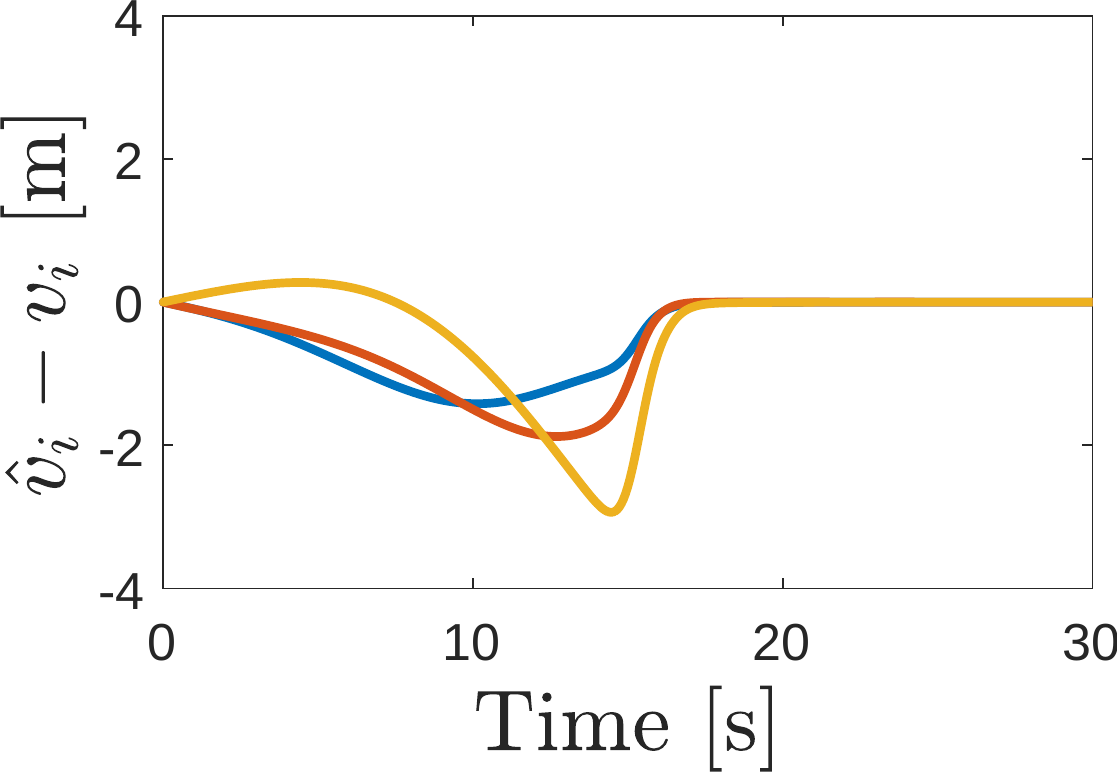}
   \label{fig:n2}
   }
\caption{Performance of the proposed navigation observer with biased acceleration measurement (IE, noise-free)}
    \label{fig:7}
\end{figure*}

In this section, we present simulation results to illustrate the properties of the proposed observers. Full details and additional figures can be found in Appendix. First, consider the case of measurable velocity in Section \ref{sec3} with two trajectories, one being PE and the other being IE but not PE. In Fig. \ref{fig:comparison}, we observe that the estimate from the PEBO with measurement noise ultimately converges to a small neighbourhood of its true value, showing robustness \emph{vis-\`a-vis} noise. We also simulated the I\&I observer \cite{KARAST}, the feature depth observer \cite{DELetal}, the LTV Kalman filter \cite{LOUetal} and the Riccati observer \cite{HAMSAM} all of which require some PE conditions. Since the observers in \cite{DELetal,KARAST} are concerned with the depth estimation, we draw them together for comparison using the IE trajectory in Fig. \ref{fig:with-v1-10}. These observers have different state spaces, so we chose the initial range estimates as close as possible to make a fair comparison. Note that both the I\&I and the feature depth observers have large ultimate errors since their excitation conditions are not satisfied. In Figs. \ref{fig:riccati1}-\ref{fig:riccati2}, we compare to the position observers in \cite{LOUetal} and \cite{HAMSAM} using both the PE and IE trajectories. As expected, these three observers admit similar accuracy for PE trajectories; however, for IE trajectories, the proposed PEBO is superior to the others.

Now let us consider position and velocity estimation in Section \ref{sec4}, testing with an IE but not PE trajectory and noisy measurements. The initial conditions for the observer are set as $\hat \theta(0)= [0_{1\times 9}, 10]$ with a small sensor bias. Fig. \ref{fig:6} shows convergence of all estimation errors to a neighborhood of zero.

Lastly, we evaluate the performance of the navigation observer in Proposition \ref{prop:navigation_observer} with three known landmarks in $\{\cali\}$. The same robot trajectory and conditions were adopted. The behaviour is shown in Fig. \ref{fig:7} in the absence of measurement noise. The pose (including the position and the attitude) estimate converge its true value after a short transient stage, showing high estimation performance.

%
\section{Concluding Remarks}
\label{sec7}
We have proposed a nonlinear observer for the problem of range estimation of a visual feature, using only its bearing and linear acceleration measurements, by means of the recently developed PEBO methodology. The design is applicable to the case with unknown sensor bias and gravitational constant, and can achieve global exponential convergence under a weak IE condition on the robot trajectory. The trade off is that the design requires a more complicated observer structure than other methods. To illustrate its utility, we apply the observer to the problem of visual inertial navigation, providing a novel almost globally convergent solution. 
\bibliographystyle{abbrv}        
\bibliography{reference}

\clearpage
\newpage
\appendix
\section*{Appendix}


\section{Proof of Proposition \ref{prop:2}}

{\em Proof.} Define the estimation error of range as $\tilde r := \hat r - r$ with $r=|z|$, the dynamics of which is given by
$$
\dot{\tilde r} = - \gamma\phi^\top \left(\phi \hat r + G_2[y^\top v \phi] + \alpha G_2[\Pi_y v]\right).
$$
Invoking \eqref{filter_r}, we have
\begequ
\label{dot:tilde_r}
\dot {\tilde r} =- \gamma \phi^\top \phi \tilde r + \et,
\endequ
in which the exponentially decaying term $\et$ is caused by initial conditions of the stable filters $G_1[\cdot]$ and $G_2[\cdot]$. 

Now we use the standard perturbation analysis to show that $\tilde r \to 0 $ as $t\to \infty$. Consider an auxiliary (nominal) system 
\begequ
\label{syst:nominal}
\dot {\tilde r} = - \gamma \phi^\top \phi \tilde r,
\endequ
for which the necessary and sufficient condition of the global exponential stability (GES) of the origin is that $\phi^\top$ is PE \cite[Thm 2.5.1]{SASBOD}. For the LTV system \eqref{syst:nominal}, invoking the converse Lyapunov theorem \cite[Thm 4.14]{KHA1}, there exist a smooth function $V(\tilde r,t)$ and constants $c_i \in \rea_{>0}$ ($i=1,\ldots,4$) satisfying
\begin{equation}
\label{eq:a.3}
\begin{aligned}
c_1 |\tilde r|^2 \le V(\tilde r, t) & \le c_2 |\tilde r|^2
\\
{\partial V \over \partial t} + {\partial V \over \partial \tilde r}(-\gamma \phi^\top \phi \tilde r) &  \le - c_3 |\tilde r|^2
\\
\left| {\partial V \over \partial \tilde r} (\tilde r,t) \right| & \le c_4 |\tilde r|.
\end{aligned}
\end{equation}
Consider the Lyapunov-like function 
$$
W(\tilde r,t) = V(\tilde r,t) + V_\varepsilon(t)
$$
with 
$$
V_\varepsilon:=   \int_t^{+\infty} {c_4^2 \over 2c_3} |\et(s)|^2 ds,
$$
which is well-defined since $\et$ is exponentially decaying --- thus square integrable. Calculating the time derivative of $W(\tilde r,t)$ along the trajectory of the \emph{perturbed} system \eqref{dot:tilde_r}, we have
$$
\begin{aligned}
\dot W & = {\partial V \over \partial t} + {\partial V \over \partial \tilde r}(-\gamma \phi^\top \phi \tilde r + \et) - {c_4^2 \over 2c_3}|\et(t)|^2
\\
& \le - c_3|\tilde r|^2 + \left|{\partial V \over \partial \tilde r} \right| |\et|  - {c_4^2 \over 2c_3}|\et|^2
\\
& \le 
-{c_3 \over 2}|\tilde r|^2
\\
& \le
- {c_3 \over 2c_2 } (W - V_\varepsilon(t)), 
\end{aligned}
$$
where in the third inequality we have used $|{\partial V \over \partial \tilde r} | |\et| \le {c_3\over 2} |\tilde r|^2 + {c_4^2 \over 2c_3} |\et|^2$, as well as the last equality in \eqref{eq:a.3}. Using the comparison lemma to the inequality $\dot W \le - {c_3 \over 2c_2}(W- V_\varepsilon(t) )$ and the fact $V_\varepsilon\to 0$ exponentially due to its definition, we have both $W(\tilde r ,t) $ and $\tilde r$ converging to zero exponentially fast as $t\to\infty$. Invoking the algebraic relation $z = ry$, as well as the boundedness assumption of $y$, it completes the proof.
\qed


\section{Proof of Proposition \ref{prop:navigation_observer}}
{\em Proof.} 
In the problem set of navigation, the full system dynamics is given by
\begin{equation}
\label{dyn:full_navigation}
\begin{aligned}
\dot R & = R\Omega_\times
\\
\dot v & = - \Omega_\times v + a + b_a + R^\top g
\\
\dot r_1 & = - y_1^\top v
\\
& ~\vdots
\\
\dot r_n & = - y_n^\top v
\end{aligned}
\end{equation}
With the dynamic extension $\dot Q = Q\Omega_\times$, we define the constant vector
$$
g_c:= Q(0)R(0)^\top g,
$$
and the full-state variable
$$
\chi:=\begmat{v^\top & b_a^\top & g_c^\top & r_1 & \ldots & r_n}^\top
$$
with $r_i:= |z_i|$ ($i \in \caln$). Then, we have
$$
\dot \chi = A_e(\Omega,Q, \bar y) \chi + B_e(a)
$$
with the matrices $A_e$ and $B_e$ defined in \eqref{AeBe}. It yields
$$
\dot{\ob{\chi - \xi}} ~=~ A(\bar y,\Omega,Q)[\chi - \xi],
$$
thus $\chi= \xi + \Psi\theta$ with $\theta:= \chi_0$.

From \eqref{equality1}, we have
$$
\Lambda_\phi T_r (\xi + \Psi\theta) + G_2
\left[
(\Lambda_\phi \bar y + \alpha \mathbf{\Pi}) T_v(\xi + \Psi\theta)
\right]
=0
$$
and 
$$
\bar y= 
\begmat{y_1^\top \\ \vdots \\ y_n^\top} , \quad 
\mathbf{\Pi}= \begmat{\Pi_{y_1} \\ \vdots \\ \Pi_{y_n}} 
$$ 
in which we have defined $\phi_i$ as \eqref{phi:r} for the $i$-th feature point with a slight abuse of notation using subscripts. Then, we obtain the linear regressor
$
y_{\tt N} = \psi^\top \theta
$
with \eqref{psi_y:navigation}. Following Proposition \ref{prop:5}, if $\psi$ is IE, then $\hat \theta= \calh[y_{\tt N}, \psi]$ provides a globally exponentially convergent estimate to $\theta$. Invoking state boundedness from Assumption \ref{ass:1}, it yields for all $i \in \caln$
\begequ
\label{conv:v_zi}
\lim_{t\to\infty}\left|
\begmat{\hat v(t) - v(t) \\ \hat z_i(t) - z_i(t)}
\right| =0
\quad
\mbox{(exp.)}.
\endequ

The second step is to show that $\hat Q_c \in SO(3)$ is an almost global asymptotic estimate to $Q_c$, which is similar to attitude estimation in \cite{MAHetal} but with a constant matrix $ Q_c$ rather than a time-varying one. Define its estimation error as
$$
\tilde Q_c:= Q_c \hat Q_c^\top,
$$
and its time derivative is given by\footnote{The operator $\skew(\cdot)$ is defined as $\skew(A) := \hal (A- A^\top)$ for a square matrix $A$.}
\begin{equation}
\label{dot_tilde_Q_c}
\begin{aligned}
\dot{\tilde Q}_c
& = \tilde Q_c \sum_{i=1}^{n-1} {k_i\over 2}  [(\hat Q_c Q \hat \eta_i)({}^I \eta_i)^\top - {}^I \eta_i (\hat Q_c Q \hat \eta_i)^\top]
\\
& = \tilde Q_c \sum_{i=1}^{n-1}k_i \skew [\tilde Q_c^\top  {}^I \hat \eta_i  ({}^I \eta_i)^\top]
\\\
& = \tilde Q_c \skew [\tilde Q_c^\top M] + \cale(t)
\end{aligned}
\end{equation}
where in the second equation we have used the fact $[c\times d]_\times = d c^\top - c d^\top$ for any $c,d \in \rea^n$, and define the constant matrix $M:= \sum_{i=1}^{n-1}k_i {}^I \eta_i ({}^I \eta_i)^\top$, the vectors ${}^I \hat \eta_i = R \hat \eta_i$, and
$$
\cale:= \tilde Q_c \sum_{i=1}^{n-1}k_i \skew [\tilde Q_c^\top R \tilde \eta_i ({}^I \eta_i)^\top]
$$
with exponentially decaying terms $\tilde \eta_i := \hat \eta_{i} - \eta_i$, which satisfy $\eta_i \to 0$ as $t\to 0$ exponentially fast invoking \eqref{conv:v_zi}. Since both $\tilde Q_c$ and $R$ living in a compact space $SO(3)$ and ${}^I \eta_i$ being constant, there exist two constants $a_0,a_1>0$ such that
\begin{equation}
\label{cale_ineq}
|\tr(\cale^\top(t) M)| \le a_0 e^{-a_1t} =: \varepsilon(t).
\end{equation}
We consider the time-varying Lyapunov function
$$
V(\tilde Q_c,t) = \sum_{i=1}^{n -1} k_i |{}^I \eta_i|^2 - \tr(\tilde Q_c^\top M) +  \int_t^\infty | \varepsilon (s)| ds
$$
which is a non-negative function and well defined, since $\varepsilon$ is absolutely integrable, {\em i.e.}
$$
\int_0^\infty |\varepsilon(s)|ds < +\infty.
$$
Now let us show non-negativeness of $V(\tilde{Q}_c,t)$. Since the last term $\int_t^\infty | \varepsilon (s)| ds \ge 0$, we focus on the first two terms. According to the definition of matrix $M$, invoking that ${}^I\eta_i$ are constant vectors, we have
$$
\begin{aligned}
\sum_{i=1}^{n -1} k_i |{}^I \eta_i|^2 & - \tr(\tilde Q_c^\top M)
\\ &
=
\sum_{i=1}^{n -1} k_i |{}^I \eta_i|^2 \left[
1 - \tr(
\tilde Q_c^\top   {{}^I \eta_i \over |{}^I \eta_i|} 
({{}^I \eta_i \over |{}^I \eta_i|} )^\top
)
\right].
\end{aligned}
$$
Noting that $\tilde Q_c \in SO(3)$ and ${{}^I \eta_i \over |{}^I \eta_i|}$ is a unit vector, we have $ \tr(
\tilde Q_c^\top   {{}^I \eta_i \over |{}^I \eta_i|} 
({{}^I \eta_i \over |{}^I \eta_i|} )^\top
) \le 1$. As a result, we verify that $\tilde V(\tilde Q_c,t) \ge 0$ .

The time derivative of $V(\tilde Q_c,t)$ is given by
$$
\begin{aligned}
\dot V & = - \tr\left( \skew(\tilde Q_c^\top M)^\top \tilde Q_c^\top M + \cale^\top  M \right) - |\varepsilon(t)|
\\
&= - \big\| \skew \big(\tilde Q_c^\top M \big) \big\|^2 - \tr\big(\cale^\top M\big)
 -|\varepsilon(t)|
\\
&\le - \big\| \skew \big(\tilde Q_c^\top M \big) \big\|^2,
\end{aligned}
$$
in which we have used \eqref{cale_ineq} in the last inequality. Then, it yields
$$
\int_0^\infty \big\| \skew \big(\tilde Q_c^\top(s) M \big) \big\|^2 ds < +\infty.
$$
Invoking the boundedness of the time derivative of $ \skew \big(\tilde Q_c^\top M)$ and using Babalat's lemma, we conclude that all the trajectories converge to the invariant set
$$
\Omega_e := \{\tilde Q_c \in SO(3) ~|~ \skew \big(\tilde Q_c^\top M \big) =0\}.
$$
Following the similar procedure of the proof in \cite[Theorem 5.1]{MAHetal}, we can show that the set $\Omega_e$ has a locally exponentially stable equilibrium $\tilde Q_c = I_3$, and other three isolated unstable equilibria $\bq_i$ ($i=1,2,3$) on $SO(3)$, and
there are no poles on the imaginary axis. By using the non-autonomous version of Hartman-Grobman theorem \cite{AULWAN2}, the observation error dynamics \eqref{dot_tilde_Q_c} is topologically equivalent to an LTV dynamics in a small neighbourhood of these three unstable equilibria. As a result, only some very specific trajectories, from a zero Lebesgue measure set $\calm_\epsilon$, ultimately converge to the unstable equilibria $\bq_i$ ($i=1,2,3$). Combining the local exponential stability of the equilibrium $I_3$, thus it yields the almost global asymptotic stability of the error dynamics \eqref{dot_tilde_Q_c} at $\tilde Q_c = I_3$. It is equivalent to show that
\begequ
\label{tilde_R}
\forall \tilde{Q}_c(0)\in SO(3)\backslash \calm_\epsilon,\quad
\lim_{t\to+\infty}\|\hat Q_c(t) - Q_c\| =0,
\endequ
and $\hat Q_c(t) \in SO(3), \; \forall t\ge 0$. Namely, $\hat R$ provides an asymptotically convergent estimate to $R$ as well.

The last part of the proof is to show the convergence of position estimate $\hat x$. Now define its estimation error as 
$
\tilde x := \hat x - x,
$
and it yields
$$
\begin{aligned}
\dot{\tilde x} & = \hat R\hat v - R v + \sum_{i=1}^{n} \sigma_i ({}^I z_i - \hat x - \hat R \hat z_i)
\\
& = \sum_{i=1}^{n} \sigma_i ({}^I z_i - \hat x - Rz_i) + \et
\\
& = \sum_{i=1}^{n} \sigma_i ({}^I z_i - \hat x - RR^\top ({}^I z_i - x)) + \et,
\end{aligned}
$$
thus 
$$
\dot{\tilde x} =  - \sigma_{\Sigma}\cdot \tilde x + \et
$$
with $ \sigma_{\Sigma}:=\sum_{i=1}^{n} \sigma_i>0$ and an exponentially decaying term $\et$, where in the second equation we have used the convergence \eqref{conv:v_zi} and \eqref{tilde_R}, as well as the compactness of $SO(3)$ and the boundedness of $v$. The dynamics of $\tilde x$ is a linear time-invariant (LTI), stable system perturbed by $\et$, thus also being globally exponentially stable. It completes the proof.
\qed


\section{Simulation details}

In this section, we provide additional details to the simulation results in Section \ref{sec6}.

{\em Position estimation with velocity measurement.} We consider two trajectories: the first trajectory being PE, {\em i.e.}, 
$$
x_1 = \begmat{\cos({t\over 2}) \\ {1\over4} \sin(t) \\- {\sqrt{3}\over 4}\sin(t) }
,\quad
\Omega_1 = \begmat{\sin(0.1+\pi) \\ 0.5\sin(2t) \\ 0.1\sin(0.3t+{\pi\over3})}
$$ 
from the initial attitude $R(0) = I_3$; and the second being IE but not PE with the same scenario as the first one but decaying after $4$s, {\em i.e.},
$$
\begmat{{}^Iv_2 \\ \Omega_2} = \left\{
\begin{aligned}
& \begmat{\dot x_1(t) \\ \Omega_1(t)},  & t\in [0,4]
\\
&  \begmat{e^{-5(t-4)}\dot x_1(t) \\ e^{-5(t-4)}\Omega_1(t) }, & t \ge 4.
\end{aligned}
\right.
$$
The parameters and initial conditions in the observers are selected as
$$
\alpha =1, ~ \gamma = 50,~ \hat r(0) = 0, ~ \xi(0) = 0, ~\hat \theta(0) =0.
$$
For this case, We give some simulation results complementary to Fig. \ref{fig:comparison}. In Fig. \ref{fig:v1} we present the estimation performance of the proposed gradient observer \eqref{grad_obs} and the PEBO \eqref{pebo:1} under two robot trajectories. For the first trajectory, both achieve satisfactory performance; but for the IE trajectory there is a significant ultimate estimation error from the gradient observer. In the feature depth observer \cite{DELetal}, we set $k_i=1$ ($i=1,\ldots,3$) and $\hat x(0) = [0.2~ 0.2~ 0.5]^\top$. In the I\&I observer \cite{KARAST} we select the initial guess as $0.5$ and the gain $\lambda =0.2$. Even though states in these three observers live in different spaces, we tried to make the initial range estimates as close as possible to make a fair comparison. In Figs. \ref{fig:riccati1}-\ref{fig:riccati2}, we also compare to  the LTV Kalman filter \cite{LOUetal} and the Riccati observer \cite{HAMSAM}. The initial conditions of position estimates are both set as $\hat z (0) =0_3$. See also the last figure in Fig. \ref{fig:last}.

\begin{figure*}[!htb]
 \centering
   \subfigure[Range estimate $\hat r$ (PE)]{
   \includegraphics[width=0.27\textwidth]{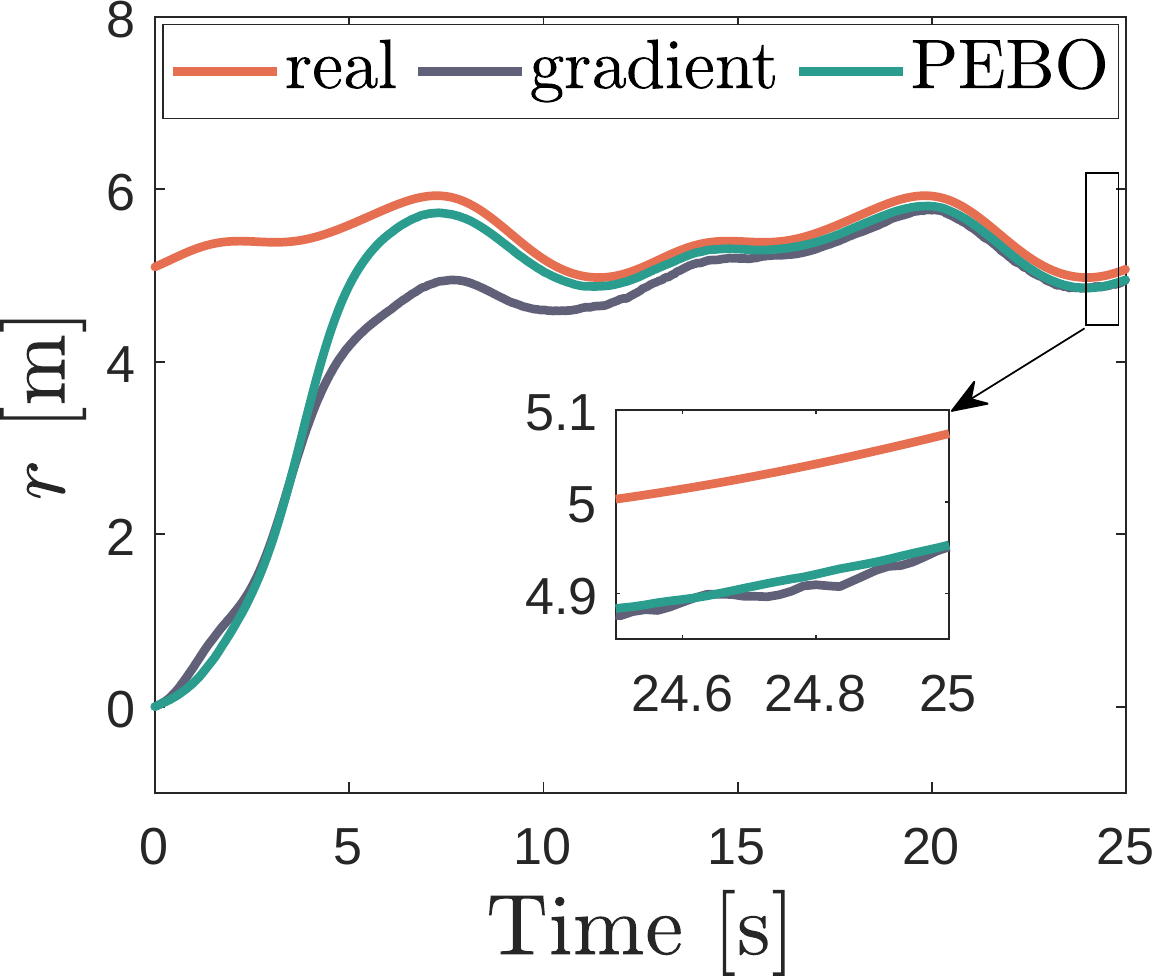}
   \label{fig:with-v1-1}
   }
   \subfigure[Position estimate $\hat z$ (PE)]{
   \includegraphics[width=0.27\textwidth]{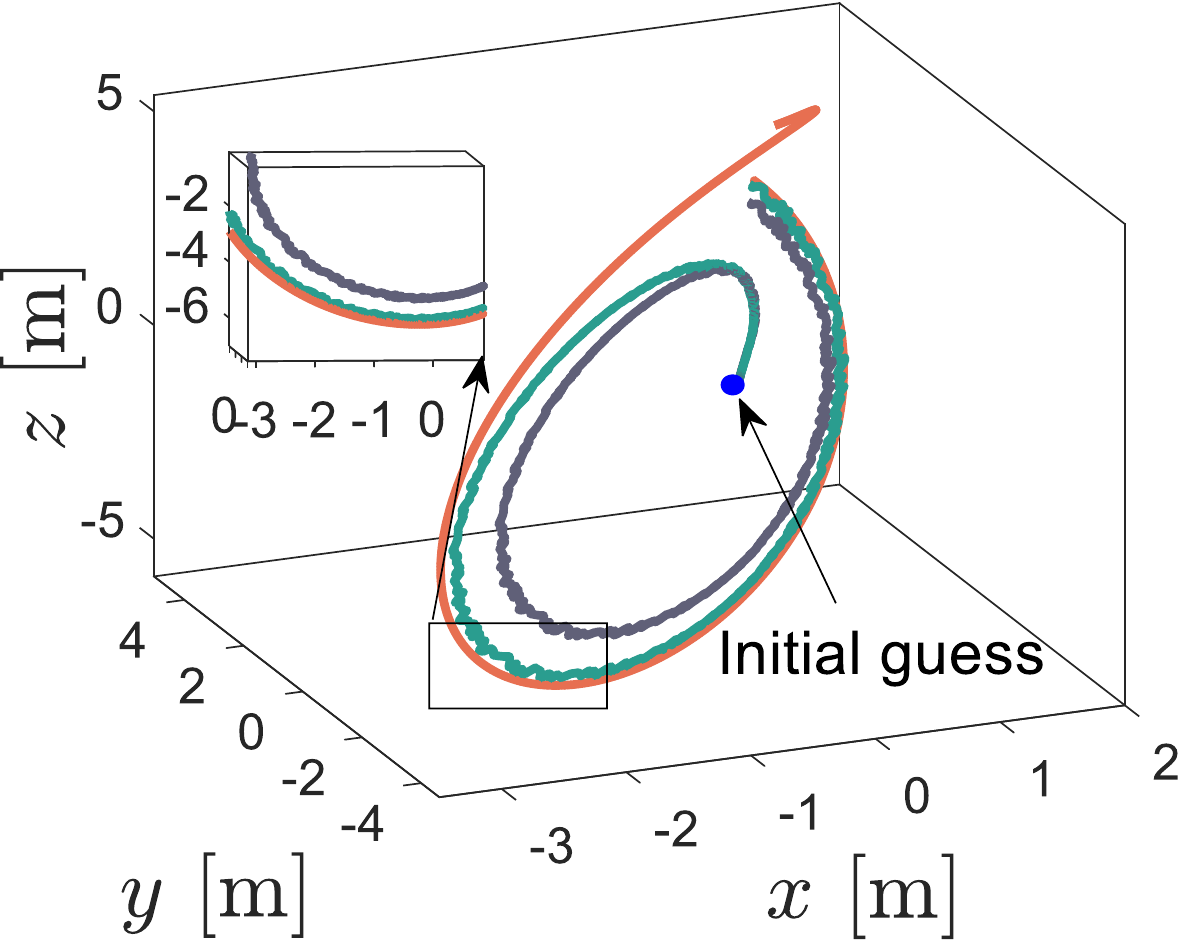}
   \label{fig:with-v1-2}
   }
   \subfigure[Range estimate $\hat r$ (IE)]{
   \includegraphics[width=0.27\textwidth]{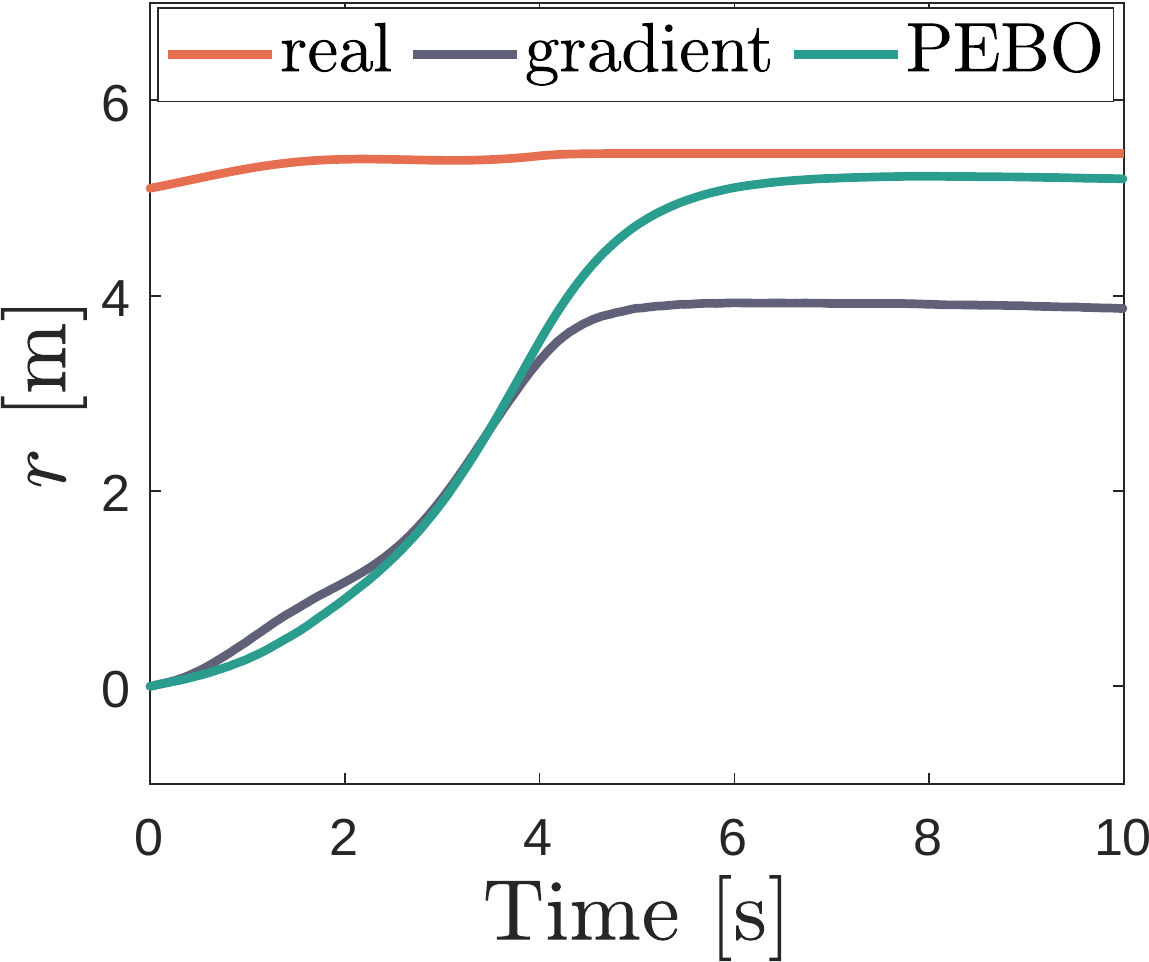}
   \label{fig:with-v1-1}
   }
   \subfigure[Position estimate $\hat z$ (IE)]{
   \includegraphics[width=0.27\textwidth]{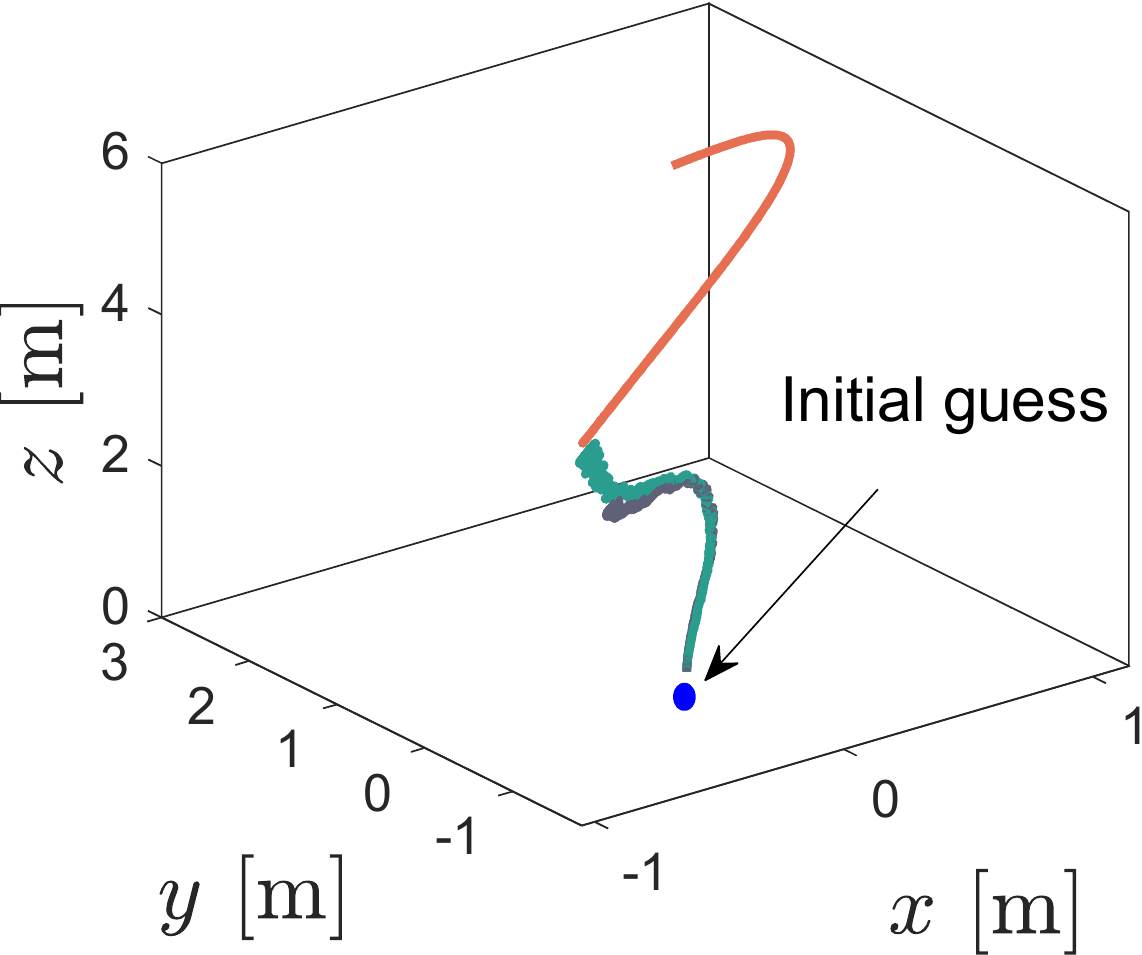}
   \label{fig:with-v1-2}
   }
 \caption{Performance of the proposed position observers under PE and IE trajectories, {\em i.e.}, the gradient observer \eqref{grad_obs} and the PEBO \eqref{pebo:1} for the case with unbiased velocity information and noisy measurement}
 \label{fig:v1} 
\end{figure*}

{\em Position estimation with biased acceleration measurement.}
In the second group of simulations, we consider position and velocity estimation with bearing and biased linear acceleration measurable. The robot trajectory we use is given by for $t\in [0,20]$s
$$
{}^I a  = 
\begmat{-0.5\cos(0.5t) \\ -0.5\sin(t) \\ {\sqrt{3}\over4} \sin(t)}
, \quad
\Omega = 
\begmat{0.2\sin(0.1t+\pi) \\ 0.1\sin(0.2t) \\ 0.1\sin(0.3t+{\pi \over 3})}
$$
and $t\ge 20$s
$$
{}^I a = 0_3, \quad \Omega = 0_3.
$$
Such a trajectory only guarantees the regression being IE rather than PE. The parameters and initial conditions for the observer in Section \ref{sec4} are set as 
$$
\alpha=2,~ \gamma = 100,~ \rho =0.4, ~k_p = 500, ~\hat \theta(0)= [0, \ldots, 0, 10]^\top.
$$
We consider a small sensor bias $b_a = [0.09, 0.10, 0.11]^\top$ and the feature point is located at ${}^I z = [-2,1,3]^\top$. We show in the second sub-figure of Fig. \ref{fig:last} the position $z$ and its estimate $\hat z$ as complement to Fig. \ref{fig:6}.


{\em Vision-aided inertial navigation.}
Finally, we evaluate the performance of the visual inertial navigation observer in Proposition \ref{prop:navigation_observer}. Three known landmarks in the inertial frame are located at $[-2,1,3]^\top, [-2,2,1]^\top$ and $[1,1,1]^\top$. The same robot trajectory and parameters as in the last subsection were adopted, except $\alpha =1$ and $k_p = 10^3$. The behaviour is shown in Fig. \ref{fig:7} and the last sub-figure of Fig. \ref{fig:last} in the absence of measurement noise. Note that, for visualization, in Fig. \ref{fig:last} we draw in the inertial frame $\{\cali\}$ the attitude and its estimate by means of the unit vector along $x$-axis of the body-fixed frame $\{\calb\}$.

\begin{figure*}[!htb]
 \centering
   \subfigure[Position estimate $\hat z$ from PEBO, I\&I observer and feature depth observer (complementary to Fig. \ref{fig:with-v1-10})]{
   \includegraphics[width=0.27\textwidth]{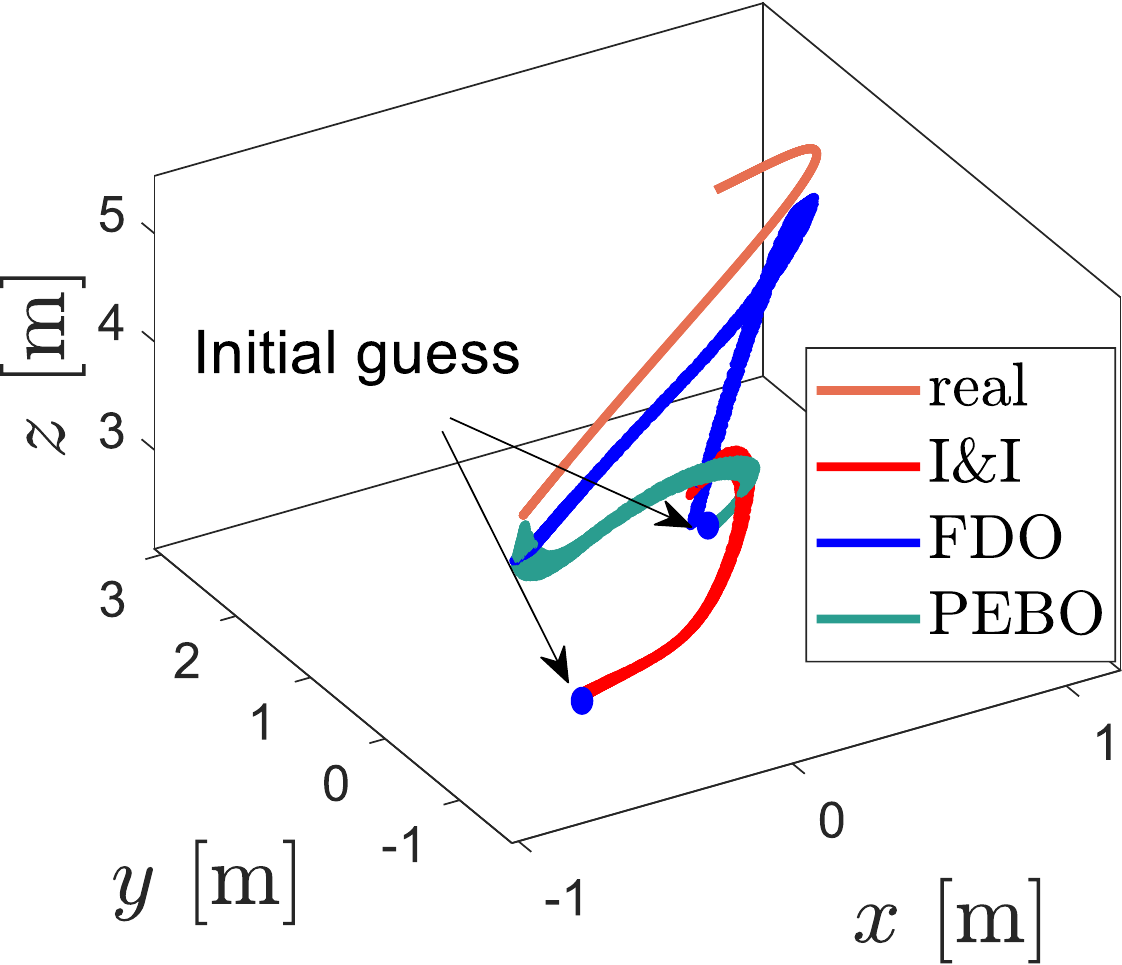}
   \label{fig:with-l1}
   }
   ~
   \subfigure[Position estimate $\hat z$ (complementary to Fig. \ref{fig:a5})]{
   \includegraphics[width=0.27\textwidth]{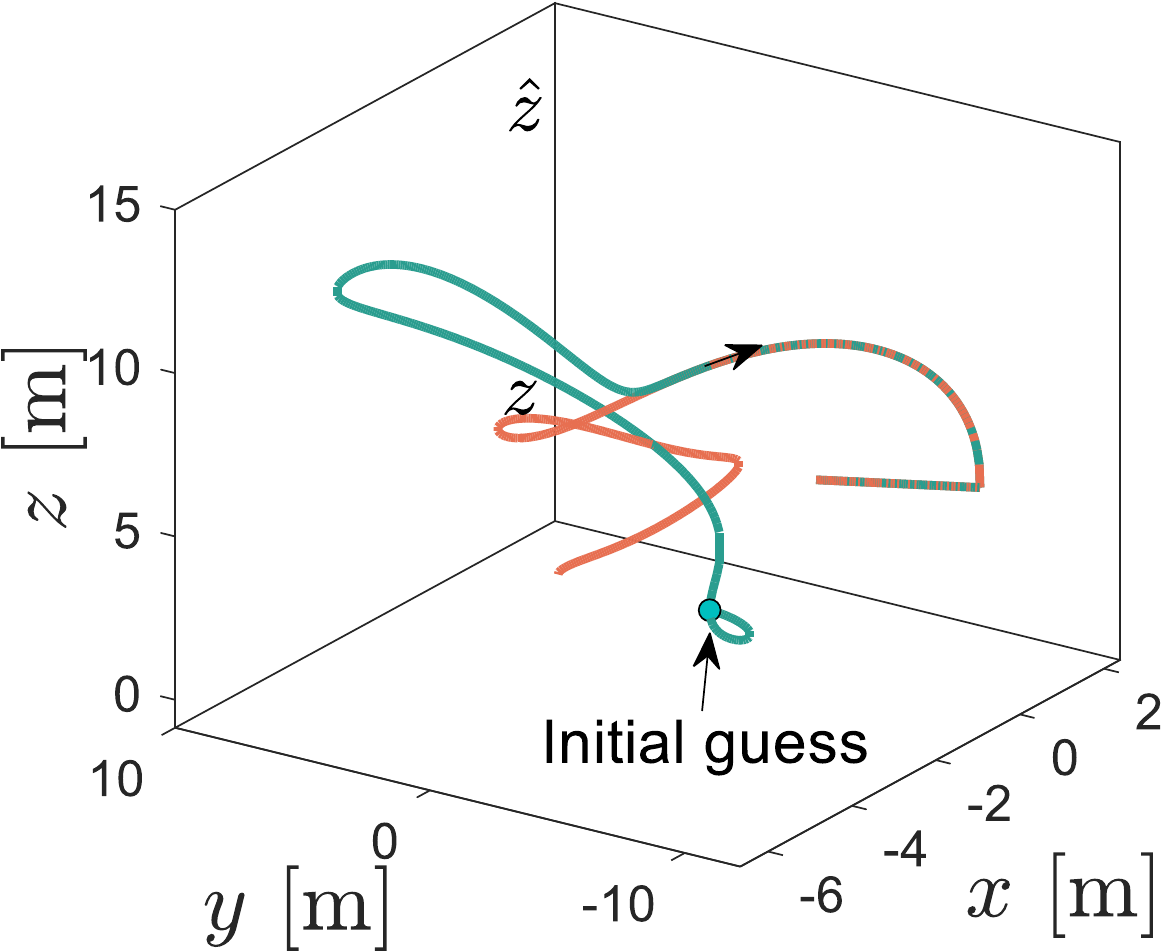}
   \label{fig:with-l2}
   }
     ~
   \subfigure[Attitude estimate $\hat R$ (complementary to Fig. \ref{fig:n1})]{
   \includegraphics[width=0.21\textwidth]{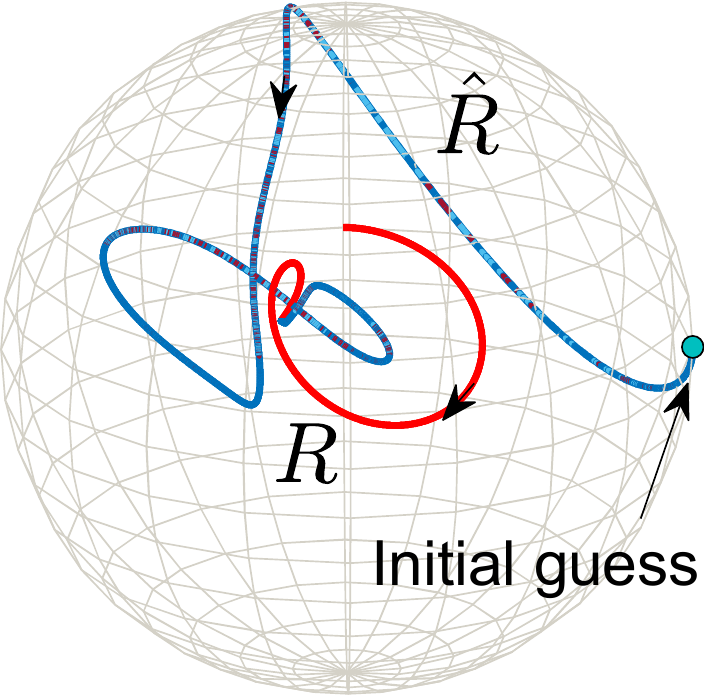}
   \label{fig:with-l3}
   }
 \caption{Some complementary figures: Estimated trajectories and their true values}
 \label{fig:last}
\end{figure*}

\end{document}